\newtheorem{proposition}{Proposition}
\newtheorem{lemma}{Lemma}
\newtheorem{definition}{Definition}
\newtheorem{example}{Example}
\newtheorem{construction}{Construction}
\newtheorem{note}{Note}
\title{Distributed Space Time Coding for Wireless Two-way Relaying}
\begin{document}

\author{Vijayvaradharaj T. Muralidharan and B. Sundar Rajan,~\IEEEmembership{Senior Member,~IEEE}\footnote{\hrule \vspace{0.2cm} The authors are with the Dept. of Electrical Communication Engineering, Indian Institute of Science, Bangalore-560012, India (e-mail:tmvijay@ece.iisc.ernet.in; bsrajan@ece.iisc.ernet.in).}}

\maketitle
\begin{abstract}
We consider the wireless two-way relay channel, in which two-way data transfer takes place between the end nodes with the help of a relay. For the Denoise-And-Forward (DNF) protocol, it was shown by Koike-Akino et. al. that adaptively changing the network coding map used at the relay greatly reduces the impact of Multiple Access interference at the relay. The harmful effect of the deep channel fade conditions can be effectively mitigated by proper choice of these network coding maps at the relay.  Alternatively, in this paper we propose a Distributed Space Time Coding (DSTC) scheme, which effectively removes most of the deep fade channel conditions at the transmitting nodes itself without any CSIT and without any need to adaptively change the network coding map used at the relay. It is shown that the deep fades occur when the channel fade coefficient vector falls in a finite number of vector subspaces of $\mathbb{C}^2$, which are referred to as the singular fade subspaces. DSTC design criterion referred to as the \textit{singularity minimization criterion} under which the number of such vector subspaces are minimized is obtained. Also, a criterion to maximize the coding gain of the DSTC is obtained. Explicit low decoding complexity DSTC designs which satisfy the singularity minimization criterion and maximize the coding gain for QAM and PSK signal sets are provided. Simulation results show that at high Signal to Noise Ratio, the DSTC scheme provides large gains when compared to the conventional Exclusive OR network code and performs slightly better than the adaptive network coding scheme proposed by Koike-Akino et. al.
\end{abstract}

\section{Background and Preliminaries}
\subsection{Background}
We consider the two-way wireless relaying scenario shown in Fig.\ref{DNF_protocol}. Two-way data transfer takes place between the nodes A and B with the help of the relay R. It is assumed that all the three nodes operate in half-duplex mode, i.e., they cannot transmit and receive simultaneously in the same frequency band. The idea of physical layer network coding for the two way relay channel was first introduced in \cite{ZLL}, where the multiple access interference occurring at the relay was exploited so that the communication between the end nodes can be done using a two phase protocol. A protocol called Denoise-And-Forward (DNF) was proposed in \cite{PoYo_DNF}, which consists of the following two phases: the \textit{multiple access} (MA) phase (Fig. \ref{DNF_MAC}), during which A and B simultaneously transmit to R and the \textit{broadcast} (BC) phase (Fig. \ref{DNF_BC}) during which R transmits to A and B. Network coding map, which is also referred to as the denoising map, is chosen at R in such a way that A (B) can decode the messages of B (A), given that A (B) knows its own messages. During the MA phase, the transmissions from the end nodes were allowed to interfere at R, but the harmful effect of this interference was mitigated by a proper choice of the network coding map used at R. Information theoretic studies for the physical layer network coding scenario were reported in \cite{KMT},\cite{PoY}. A differential modulation scheme with analog network coding for bi-directional relaying was proposed in \cite{LiYoAnBiAt}. The design principles governing the choice of modulation schemes to be used at the nodes for uncoded transmission were studied in \cite{APT1}. An extension for the case when the nodes use convolutional codes was done in \cite{APT2}. A multi-level coding scheme for the two-way relaying scenario was proposed in \cite{HeN}. Power allocation strategies and lattice based coding schemes for bi-directional relaying were proposed in \cite{WiNa}.

%

%
%
%
  It was observed in \cite{APT1} that the network coding map used at the relay needs to be changed adaptively according to the channel fade coefficients, in order to minimize the impact of the Multiple Access Interference (MAI).  A computer search algorithm called the \textit{Closest-Neighbour Clustering} (CNC) algorithm was proposed in \cite{APT1} to obtain the adaptive network coding maps resulting in the best distance profile at R. An adaptive network coding scheme for MIMO two-way relaying based on the CNC algorithm was proposed in \cite{Ak_MIMO}. An alternative procedure to obtain the adaptive network coding maps, based on the removal of deep channel fade conditions using Latin Squares was proposed in \cite{NVR}. A quantization of the set of all possible channel realizations based on the network code used was obtained analytically in \cite{VNR}. An extension of the adaptive network coding scheme for MIMO two-way relaying using Latin Rectangles was made in \cite{VvR_MIMO}.

As an alternative to the adaptive network coding schemes in \cite{APT1} and \cite{NVR}--\cite{VNR}, in this paper, we propose a Distributed Space Time Coding (DSTC) scheme, which mitigates the effect of MAI to the fullest extent possible at the transmitting nodes itself without any CSIT. For the proposed DSTC scheme the network coding map used at R need not be changed adaptively according to channel conditions which reduces the complexity at R to a great extent and also eliminates the need for overhead bits from R to A and B to indicate the choice of the network coding map.

A distributed space time coding scheme for a wireless two-way relay network with multiple relay nodes was proposed in \cite{TaFeTrAr}, in which the DSTC was constructed at the relay nodes. In the proposed scheme, the DSTC is constructed at the end nodes A and B.

\subsection{Signal Model}
Throughout, a quasi-static fading scenario is assumed with the Channel State Information (CSI) available only at the receivers. Let $h_A$ and $h_B$ denote the fade coefficients associated with A-R and B-R links and $h'_A$ and $h'_B$ denote the fade coefficients associated with R-A and R-B links. All the fading coefficients are assumed to follow Rician distribution.

Let $\mathcal{S}$ denote the unit energy $M=2^{\lambda}$ point constellation used at the end nodes. Let $\mu: \mathbb{F}_2^\lambda \rightarrow \mathcal{S}$ denote the mapping from bits to complex symbols used at A and B.
\subsubsection{Denoise-And-Forward (DNF) protocol}
In the sequel, we briefly describe the adaptive network coding schemes based on the DNF protocol proposed in \cite{APT1}, {\cite{NVR} -- \cite{VNR}}. Throughout the paper, by DNF protocol, we refer to the schemes proposed in \cite{APT1} and \cite{NVR}--\cite{VNR}. 

In the DNF protocol, transmission occurs in two phases: Multiple Access (MA) phase during which A and B simultaneously transmit to R and Broadcast (BC) phase during which R transmits to A and B.
\subsubsection*{MA Phase}
  Let $x_A= \mu(s_A)$, $x_B=\mu(s_B)$ $\in \mathcal{S}$ denote the complex symbols transmitted by A and B respectively, where $s_A,s_B \in \mathbb{F}_2^\lambda$. The received signal at $R$ is given by,
$$y_R=h_{A} x_A + h_{B} x_B +z_R.$$ The additive noise $z_R$ is assumed to be $\mathcal{CN}(0,\sigma^2),$ where $\mathcal{CN}(0,\sigma^2)$ denotes the circularly symmetric complex Gaussian random variable with mean zero and variance $\sigma ^2.$   
\subsubsection*{BC Phase}
Let $(\hat{x}_A,\hat{x}_B) \in \mathcal{S}^2$ denote the Maximum Likelihood (ML) estimate of $({x}_A,{x}_B)$ at R based on the received complex number $y_{R}.$
Depending on the value of $h_A$ and $h_B,$ R chooses a many-to-one map $\mathcal{M}^{h_A,h_B}:\mathcal{S}^2 \rightarrow \mathcal{S}',$ where $\mathcal{S}'$ is the signal set (of size between $M$ and $M^2$) used by R during the $BC$ phase.

In order to ensure that A (B) is able to decode B's (A's) message, the map $\mathcal{M}^{h_A,h_B}$ should satisfy the exclusive law \cite{APT1}, i.e.,

{\vspace{-.5 cm}
\begin{align}
\left.
\begin{array}{ll}
\nonumber
\mathcal{M}^{h_A,h_B}(x_A,x_B) \neq \mathcal{M}^{h_A,h_B}(x'_A,x_B), \; \mathrm{for} \;x_A \neq x'_A, \; \forall \;x_B \in  \mathcal{S},\\
\nonumber
\mathcal{M}^{h_A,h_B}(x_A,x_B) \neq \mathcal{M}^{h_A,h_B}(x_A,x'_B), \; \mathrm{for} \;x_B \neq x'_B, \; \forall \;x_A \in \mathcal{S}.
\end {array}
\right\} 
\end{align}\vspace{-.5 cm}}
 
The CNC algorithm proposed in \cite{APT1} obtains the map $\mathcal{M}^{h_A,h_B}$ which results in the best distance profile during the MA phase at R. The CNC algorithm is run for all possible channel realizations and a partition of the set of all channel realizations is obtained depending on the chosen network coding map. For a given channel realization, the choice of the network coding map is indicated to A and B using overhead bits. During the BC phase R transmits $x_R=\mathcal{M}^{h_A,h_B}(\hat{x}_A,\hat{x}_B) \in \mathcal{S'}.$  The received signals at A and B during the BC phase are respectively given by,

{\vspace{-.5 cm}
\begin{align*}
y_A=h'_{A} x_R +
 z_A,\\y_B=h'_{B} x_R + z_B,
\end{align*}
\vspace{-.6 cm}}

\hspace{-.4 cm}where $z_A$ and $z_B$ are independent and $\mathcal{CN}(0, \sigma^2).$
Since the map $\mathcal{M}^{h_A,h_B}$ satisfies the exclusive law and A (B) knows its own message $x_A$ ($x_B$), it can decode $x_B$ ($x_A$) by decoding $x_R.$ 

 The CNC algorithm optimizes the entire distance profile instead of maximizing only the minimum distance. In some cases, this results in the use of signal sets with a larger cardinality during the BC phase. To solve this problem, an algorithm called the Nearest Neighbour Clustering (NNC)  algorithm was proposed in \cite{APT1} which maximizes the minimum distance alone, instead of optimizing the entire distance profile. 

The choice of the network coding map obtained depends only on the ratio $\frac{h_B}{h_A}$ and not the individual values of $h_A$ and $h_B$ \cite{APT1}.
In \cite{NVR}, the values of $\frac{h_B}{h_A}$ for which deep channel conditions occur were identified and network coding maps which remove the harmful effect of these  deep channel conditions were obtained by the completion of partially filled Latin Squares. 
\subsubsection{The Proposed DSTC Scheme}  
For the proposed DSTC scheme, transmission occurs in four phases: Two MA phases during which A and B simultaneously transmit to R followed by two BC phases during which R transmits to A and B. Two independent complex symbols each from A to B and B to A get exchanged at the end of the four phases and hence the information rate in bits per channel use for the proposed scheme is same as that of the DNF protocol.  

\subsubsection*{MA Phases}
Let $x_{A_1}= \mu(s_{A_1}),x_{A_2}= \mu(s_{A_2}) \in \mathcal{S}$  denote two independent complex symbols A wants to communicate to B. Similarly, B wants to communicate two independent complex symbols $x_{B_1}=\mu(s_{B_1}), x_{B_2}=\mu(s_{B_2}) \in \mathcal{S}$ to A. During the $i^{th}$ MA phase $i \in \lbrace 1,2 \rbrace,$ A transmits $f_A^i(x_{A_1},x_{A_2}) \in \mathbb{C},$ a function  of $x_{A_1}$ and $x_{A_2},$ and similarly B transmits $f_B^i(x_{B_1},x_{B_2}) \in \mathbb{C},$ a function of $x_{B_1}$ and $x_{B_2}.$ The received signal at R during the two MA phases can be written as,

{\vspace{-.5 cm}
\begin{align*}
\mathbf{y_R}=\left[ y_{R_1} \: y_{R_2} \right] &=  \left[ h_A \: h_B  \right] \begin{bmatrix} f_A^1(x_{A_1},x_{A_2}) & f_A^2(x_{A_1},x_{A_2}) \\  f_B^1(x_{B_1},x_{B_2}) & f_B^2(x_{B_1},x_{B_2}) \end{bmatrix} + \left[ \begin{array}{c} z_{R_1}  z_{R_2} \end{array} \right],\end{align*}}where $y_{R_i}$ denotes the received signal at R during the $i^{th}$ MA phase, $z_{R_1}$ and $z_{R_2}$ are independent and  $\mathcal{CN}(0, \sigma ^2).$ Let $\mathbf{x_A}=[x_{A_1} x_{A_2}]$ and $\mathbf{x_B}=[x_{B_1} x_{B_2}].$
The matrix, {\begin{equation}
\label{DSTC_design}
\mathbf{C}(\mathbf{x_A},\mathbf{x_B})=\begin{bmatrix} f_A^1(x_{A_1},x_{A_2}) & f_A^2(x_{A_1},x_{A_2}) \\  f_B^1(x_{B_1},x_{B_2}) & f_B^2(x_{B_1},x_{B_2}) \end{bmatrix}
\end{equation}}represents a \textit{DSTC codeword matrix}. Note that in the DSTC codeword matrix, $x_{A_1}$ and $x_{A_2}$ can occur only in the first row and, $x_{B_1}$ and $x_{B_2}$ can occur only in the second row. In this way the DSTC differs from space time codes for the conventional 2$\times$1 multiple antenna system with two collocated antennas at the transmitter in which the complex symbols can occupy any entry in the codeword matrix. 

For a complex number $x,$ let $x^R$ and $x^I$ denote the real and imaginary parts of $x.$

\begin{definition}
A DSTC is said to be linear if the entries of the first row of the codeword matrices are complex linear combinations of $x_{A_1} ^R,x_{A_1}^ I,x_{A_2} ^R,x_{A_2} ^I$ and the entries of the second row are complex linear combinations of $x_{B_1} ^R,x_{B_1}^ I,x_{B_2} ^R,x_{B_2} ^I.$ Any codeword matrix $\mathbf{C}(\mathbf{x_A},\mathbf{x_B})$ of a linear DSTC can be written as,

{
\vspace{-.4 cm}
\begin{equation} 
\label{DSTC_weight}
\mathbf{C}(\mathbf{x_A},\mathbf{x_B})=\sum_{i=1,2} \mathbf{W_{A_i}^R} x_{A_i}^R+\mathbf{W_{A_i}^I} x_{A_i}^I+\mathbf{W_{B_i}^R} x_{B_i}^R+\mathbf{W_{B_i}^I} x_{B_i}^I.
\vspace{-.4 cm}
\end{equation}}
\end{definition}

The matrices  $\mathbf{W_{A_i}^R, W_{A_i}^I, W_{B_i}^R}$ and $ \mathbf{W_{B_i}^I}$ are referred to as the \textit{weight matrices} of the DSTC. Note that the entries of the second (first) row are zeros in the matrices $\mathbf{W_{A_i}^R}$ and $\mathbf{W_{A_i}^I}$ ($\mathbf{W_{B_i}^R}$ and $\mathbf{W_{B_i}^I}$).   
\begin{definition}
A linear DSTC is said to be over the signal set $\mathcal{S}$ if the entries of the first (second) row of the codeword matrices are complex linear combinations of $x_{A_1}$ and $x_{A_2}$ ($x_{B_1}$ and $x_{B_2}$), where $x_{A_1},$ $x_{A_2},$ $x_{B_1}$ and $x_{B_2}$ belong to the signal set $\mathcal{S}.$
\end{definition} 
For a linear DSTC over $\mathcal{S}$, codeword matrix $\mathbf{C}(\mathbf{x_A},\mathbf{x_B})$ is of the form  $\mathbf{C}(\mathbf{x_A},\mathbf{x_B})=\begin{bmatrix} \mathbf{x_A} \mathbf{M_A} \\ \mathbf{x_B} \mathbf{M_B} \end{bmatrix},$ where $\mathbf{M_A}$ and $\mathbf{M_B}$ are $2 \times 2$ complex matrices referred to as the \textit{generator matrices} at node A and B respectively. Throughout the paper, we consider only linear DSTCs over a signal set $\mathcal{S}.$

\subsubsection*{BC Phases}
 Let {\small$(\hat{s}_{A_1},\hat{s}_{A_2},\hat{s}_{B_1},\hat{s}_{B_2})$} denote the maximum likelihood estimate of {\small$({s}_{A_1},{s}_{A_2},{s}_{B_1},{s}_{B_2})$} at R. The relay R transmits $x_{R_1}=\mu(\hat{s}_{A_1} \oplus \hat{s}_{B_1})$ and $x_{R_2}=\mu(\hat{s}_{A_2} \oplus \hat{s}_{B_2})$ during the first and second BC phases respectively, where $\oplus$ denotes the bit-wise XOR operation. The received signals at the end nodes during the two BC phases are given by, $y_{A_i}=h'_A x_{R_i}+z_{A_i}$ and $y_{B_i}=h'_B x_{R_i}+z_{B_i},$ where $i \in \lbrace 1,2 \rbrace.$ Since A (B) knows its own messages and the XOR map satisfies the exclusive law, A (B) can decode $s_{B_i}$ ($s_{A_{i}}$) $i \in \lbrace 1,2 \rbrace,$ by decoding $x_{R_i}.$

 Note that for the proposed DSTC scheme the signal set used during the BC phase is of the minimum cardinality $2^{\lambda}$ (the cardinality of the signal set should be at least $2^{\lambda}$ to convey $\lambda$ information bits). In contrast, for the scheme proposed in \cite{APT1}, depending on channel conditions unconventional signal sets with cardinality greater than the minimum cardinality are required. Minimum cardinality signal set is used during the BC phase and throughout the paper the focus is on optimizing the performance during the MA phase.

Some of the advantages of the proposed DSTC scheme over the schemes proposed in \cite{APT1}, \cite{NVR}--\cite{VNR} are summarized below:

\begin{itemize}
\item
Unlike the schemes proposed in \cite{APT1},\cite{NVR}--\cite{VNR}, for the proposed DSTC scheme, the network coding map used at R need not be changed adaptively according to channel conditions. Any network coding map satisfying the exclusive law will give the same performance and for simplicity, the conventional bit-wise Exclusive OR (XOR) map itself can be used. This reduces the complexity at R to a great extent and also eliminates the need for overhead bits from R to A and B to indicate the choice of the network coding map.
\item
For the scheme proposed in \cite{APT1}, for certain channel conditions the adaptive network coding map necessitates the use of unconventional signal sets with cardinality greater than the minimum cardinality required during the BC phase, which results in a degradation in performance. For the proposed scheme, the relay always uses a conventional signal set with minimum cardinality.  
\item
The adaptive network coding maps were obtained in \cite{APT1}, by exhaustive computer search. For the proposed scheme no such computer search is required, since the same network code is used irrespective of channel conditions.    
\end{itemize} 
 
 The contributions and organization of the paper are as follows:
 
\begin{itemize}
\item
For a classical $n_t \times n_r$ MIMO system with collocated antennas, deep channel fade conditions occur when the channel fade coefficient vector belongs to a finite number of vector subspaces of $\mathbb{C}^{n_t}$ referred to as the singular fade subspaces. The way in which transmit diversity schemes (space time codes) remove the harmful effect of these singular fade subspaces is discussed. The connection between the dimension of these singular fade subspaces and the transmit diversity order is explained (Section II).
\item 
The MAC phase of the DNF protocol for the two-way relaying scenario can be viewed as a virtual $2 \times 1$ MISO system. The singular fade subspaces for the classical $2 \times 1$ MISO system, are singular fade subspaces for the two-way relaying scenario as well. The connection between dimension of these singular fade subspaces and the diversity order for the adaptive network coding schemes proposed in \cite{APT1} and \cite{NVR}-\cite{VNR} is discussed (Section III A).
\item 
  The singular fade subspaces for the proposed DSTC scheme are identified. The goal of minimizing the number of singular fade subspaces results in a new design criterion referred as the \textit{singularity minimization criterion} for  DSTCs. It is shown that for a properly chosen DSTC, most of the vector subspaces which were singular fade subspaces for the DNF protocol, are no longer singular fade subspaces for the DSTC scheme. Also, a criterion to maximize the coding gain of the proposed DSTC scheme is obtained (Section III B).
\item
It is shown that for DSTCs which are over $\mathcal{S},$ where $\mathcal{S}$ is a square QAM or $2^{\lambda}$-PSK signal set, the coding gain is maximized when the generator matrices $\mathbf{M_A}$ and $\mathbf{M_B}$ at nodes A and B are unitary matrices. Explicit construction of DSTCs over QAM and PSK signal sets which satisfy the singularity minimization criterion and maximize the coding gain are provided. It is shown that for all DSTCs over $\mathcal{S}$ with unitary generator matrices $\mathbf{M_A}$ and $\mathbf{M_B}$, the ML decoding complexity at R is $O(M^3)$ for any arbitrary signal set and is $O(M^2)$ for square QAM signal sets. Note that the brute force ML decoding complexity is $O(M^4)$ (Section IV).
\item 
 Simulation results presented in Section V show that at high SNR, the DSTC scheme provides large gains when compared to the conventional XOR network code based on the DNF protocol and performs slightly better than the adaptive network coding scheme proposed in \cite{APT1}. 
 \end{itemize}

\textbf{\textit{Notations}:}
  The complex number $\sqrt{-1}$ is denoted by $j.$ The set of integers, Gaussian integers, rational, real and complex numbers are respectively denoted as $\mathbb{Z},\mathbb{Z}[j], \mathbb{Q}, \mathbb{R}$ and $\mathbb{C}.$ All the vector spaces and vector subspaces considered in this paper are over the complex field $\mathbb{C},$ unless explicitly mentioned otherwise. Throughout, vectors are denoted by bold lower case letters and matrices are denoted by bold capital letters. Let $\mathcal{CN}(0, \sigma ^2 \mathbf{I_n})$ denote the circularly symmetric complex Gaussian random vector with zero mean and covariance matrix $\sigma ^2 \mathbf{I_n},$ where $\mathbf{I_n}$ denotes the $n \times n$ identity matrix. Let $\langle\mathbf{c_1},\mathbf{c_2}, \dotso \mathbf{c_L}\rangle$ denote the vector subspace over $\mathbb{C}$ spanned by the complex vectors $\mathbf{c_1},\mathbf{c_2}, \dotso \mathbf{c_L}.$ For a matrix $\mathbf{A},$ $\mathbf{A^T}$ and $\mathbf{A^H}$ denotes its transpose and conjugate transpose respectively. For a vector subspace $V$ of a vector space, $V^{\perp}$ denotes the vector subspace $\lbrace \mathbf{x} : \mathbf{x^T v} =0 ,\forall \mathbf{v} \in V \rbrace$ and $\text{dim}(V)$ denotes the dimension of $V.$  The all zero vector of length $n$ is denoted by $\mathbf{0_{n}}.$  For a square matrix $\mathbf{A},$ let $\mathrm{rank}(\mathbf{A})$ denote its rank and let $\det(\mathbf{A})$ denote its determinant. For a complex number $x,$ $x^R$ and $x^I$ denote the real and imaginary parts of $x,$ $x^*$ denotes its conjugate and $\vert x \vert$ denotes its absolute value. For a vector $\mathbf{v},$ $\parallel \mathbf{v} \parallel$ denotes its Euclidean norm. For a matrix $\mathbf{A},$ $Row(\mathbf{A})$ and $Col(\mathbf{A})$ respectively denote the row space and column space of $\mathbf{A}.$ $\mathbb{E}(X)$ denotes the expectation of $X.$
\section{The Notion of Singular Fade Subspaces for the Collocated MIMO system}
\label{sing_MIMO}
In this section, to explain the notion of singular fade subspaces, we digress from the two-way relaying scenario and focus on the classical MIMO system with collocated antennas.  
Consider the classical MIMO system with $n_t$ transmit antennas at the transmitter Tx and $n_r$ receive antennas at the receiver Rx, with $\mathbf{H}$ being the $n_r \times n_t$ complex fade coefficient matrix. The entries of the matrix $\mathbf{H}$ are assumed to be i.i.d. and Rician distributed. 

\subsection{Singular Fade Subspaces for the Collocated MIMO system with Spatial Multiplexing}
Consider the spatial multiplexing of independent complex symbols at Tx, i.e., the received complex vector at Rx is given by $\mathbf{y}=\mathbf{H}\mathbf{x}+\mathbf{z},$ where $\mathbf{x}$ is the transmitted message vector of length $n_t$ whose components independently take values from the signal set $\mathcal{S}$ and $\mathbf{z}$ is $\mathcal{CN}(0, \sigma ^2 \mathbf{I_{n_t}}).$

Let $\mathcal{S}_{Rx}(\mathbf{H}) \subset \mathbb{C}^{n_r}$ denote the effective signal set at Rx, i.e.,  $\mathcal{S}_{Rx}(\mathbf{H})= \lbrace \mathbf{H}\mathbf{x} : \mathbf{x} \in \mathcal{S}^{n_t} \rbrace.$
Let ${\Delta \mathcal{S}}$ denote the difference constellation of the signal set $\mathcal{S},$ i.e., $\Delta \mathcal{S} = \lbrace s - s': s,s' \in \mathcal{S} \rbrace.$ 
The distances between two points in the effective constellation $\mathcal{S}_{Rx}(\mathbf{H})$ are of the form $\parallel \mathbf{H} \mathbf{\Delta x} \parallel,$ where $\mathbf{\Delta x} \neq \mathbf{0_{n_t}}, \mathbf{\Delta x} \in \Delta \mathcal{S} ^{n_t}.$ 

\begin{definition}
For an $n_t \times n_R$ MIMO system, the channel fade coefficient matrix $\mathbf{H}$ is said to be a $\textit{deep fade matrix}$ if the minimum distance of the effective constellation $\mathcal{S}_{Rx}(\mathbf{H})$ is zero. The row space of a deep fade matrix is said to be a \textit{deep fade space}.
\end{definition}

Let $\mathbf{h_k}, 1 \leq k \leq n_r,$ denote the $k^{th}$ row of $\mathbf{H}.$ Since $\parallel \mathbf{H} \mathbf{\Delta x} \parallel^2=\sum_{k=1}^{n_r} \vert \mathbf{h_k \Delta x}\vert^2,$ for the minimum distance of the effective constellation $\mathcal{S}_{Rx}(\mathbf{H})$ to be zero, all the vectors $\mathbf{h_k^T}, 1 \leq k \leq n_r,$ should fall in a vector subspace of the form $\langle\mathbf{\Delta x}\rangle^{\perp}$ for some $\mathbf{\Delta x} \in \Delta \mathcal{S} ^{n_t}.$ In other words, for $\parallel \mathbf{H} \mathbf{\Delta x} \parallel$ to be zero, the row space of $\mathbf{H}$ should be a subspace of the vector subspace of $\mathbb{C}^{n_t}$ of the form $\langle\mathbf{\Delta x}\rangle^{\perp}$ for some $\mathbf{\Delta x} \in \Delta \mathcal{S} ^{n_t}.$  
 The vector subspaces of the form $\langle\mathbf{\Delta x}\rangle^{\perp}$ are referred to as the \textit{singular fade subspaces}. Formally, a singular fade subspace can be defined as follows:

\begin{definition}
A vector subspace $\mathcal{V}$ of $\mathbb{C}^{n_t}$ is said to be a singular fade subspace if all the vector subspaces of $\mathcal{V}$ are deep fade spaces.
\end{definition}

Note that {\footnotesize$\left\langle\begin{bmatrix} 0 \\ 0 \end{bmatrix} \right \rangle$} is always a singular fade subspace referred to as the \textit{trivial singular fade subspace}. 
\begin{example}Consider the $2 \times 1$ MISO system with spatial multiplexing with 4-PSK signal set $\mathcal{S}=\lbrace \pm 1, \pm j \rbrace.$ The difference constellation of 4 PSK signal set has 9 points $\Delta \mathcal{S}=\lbrace 0, \pm 2, \pm 2j , \pm 1\pm j\rbrace.$ For this case, the set of fourteen singular fade subspaces, which are of the form $\langle\mathbf{\Delta x}\rangle^{\perp},$ where $\mathbf{\Delta x} \in \Delta \mathcal{S} ^{2}$ are given by, 

{\scriptsize
\begin{align} 
\nonumber
&\left \lbrace\left\langle\begin{bmatrix} 0 \\ 1 \end{bmatrix} \right \rangle, \left\langle\begin{bmatrix} 1 \\ 0 \end{bmatrix} \right \rangle, \left\langle\begin{bmatrix} 1 \\ 1 \end{bmatrix} \right \rangle,  \left\langle\begin{bmatrix} 1 \\ -1 \end{bmatrix} \right \rangle,   \left\langle\begin{bmatrix} 1 \\ j \end{bmatrix} \right \rangle,   \left\langle\begin{bmatrix} 1 \\ -j \end{bmatrix} \right\rangle,   \left\langle\begin{bmatrix} 1 \\ 1+j \end{bmatrix} \right \rangle,   \left\langle\begin{bmatrix} 1 \\ -1+j \end{bmatrix} \right \rangle, \right.\\
\label{sfs_example}
 & \left. \hspace{3 cm}   \left\langle\begin{bmatrix} 1 \\ 1-j \end{bmatrix} \right \rangle,   \left\langle\begin{bmatrix} 1 \\ -1-j \end{bmatrix} \right \rangle,   \left\langle\begin{bmatrix} 1 \\ 0.5+0.5j \end{bmatrix} \right \rangle,   \left\langle\begin{bmatrix} 1 \\ -0.5+0.5j \end{bmatrix} \right \rangle ,   \left\langle\begin{bmatrix} 1 \\ 0.5-0.5j \end{bmatrix} \right \rangle,    \left\langle\begin{bmatrix} 1 \\ -0.5-0.5j \end{bmatrix} \right \rangle \right \rbrace.
\end{align}}
The fade coefficient matrix (which is a row vector for this example) is a deep fade matrix (vector) if the row space of the fade coefficient vector is a subspace of one of these 14 singular fade subspaces, i.e., the fade coefficient vector should belong to one of these 14 vector subspaces. For example, $[2 \quad 1+j]^T$ belongs to the vector subspace {\footnotesize$\left \langle \begin{bmatrix} 1 \\ 0.5+0.5j \end{bmatrix} \right \rangle$} and is a deep fade matrix.

 
\end{example}

Note that the singular fade subspaces depend only on the number of transmit antennas $n_t$ and the signal set $\mathcal{S}.$ They are independent of the number of receive antennas $n_r,$ as illustrated in the following example.

\begin{example}
Consider the 2 $\times$ 2 MIMO system with 4-PSK signal set $\mathcal{S}=\lbrace \pm 1, \pm j \rbrace.$ The set of 14 singular fade subspaces for this case is the same as that of $2 \times 1$ MISO system given in \eqref{sfs_example}. For a fade coefficient matrix to be a deep fade matrix, both the rows should belong to one of these 14 vector subspaces. For example, $\begin{bmatrix}2 & 1+j\\ 1 & 0.5+0.5j \end{bmatrix}$ is a deep fade matrix since $\begin{bmatrix}2 & 1+j \end{bmatrix}^T$ and $\begin{bmatrix} 1 & 0.5+0.5j \end{bmatrix}^T$ belong to the vector subspace $\left \langle \begin{bmatrix} 1 \\ 0.5+0.5j \end{bmatrix} \right \rangle.$
\end{example}

The dimension of the singular fade subspace $\langle\mathbf{\Delta x}\rangle^{\perp},$  and the transmit diversity order of the pair-wise error event $(\mathbf{x} \rightarrow \mathbf{x'}),$ are inherently connected, where $\mathbf{\Delta x}= \mathbf{x}-\mathbf{x'}$ and $\mathbf{x}, \mathbf{x'} \in \mathcal{S}^{n_t}$. With spatial multiplexing, the transmit diversity order of the pair-wise error event $(\mathbf{x} \rightarrow \mathbf{x'})$ is 1 while $dim(\langle\mathbf{\Delta x}\rangle^{\perp})=n_t-1.$ It is the presence of these $n_t-1$ dimensional singular fade subspaces that results in a transmit diversity order of 1. 

The receive diversity order $n_r$ comes due to the fact that for a fade coefficient matrix to be a deep fade matrix, all the $n_r$ rows of the fade coefficient matrix should belong to the same singular fade subspace.

The use of full diversity space times space time codes results in the maximum transmit diversity order $n_t.$ In the next subsection, the connection between the singular fade subspaces of space time codes and transmit diversity order will be established.
\subsection{Singular Fade Subspaces for the Collocated MIMO system with Space Time Coding}
Consider the case when Tx uses a space time code $\mathcal{C}$ of size $n_t \times T,$ where $T \geq n_t.$ Let $\mathbf{C(x)}$ denote a codeword matrix of the space time code, where $\mathbf{x} \in \mathcal{S}^{K},$ where $K$ denotes the number of independent complex symbols transmitted. Similar to the spatial multiplexing case, the effective constellation at Rx which is a subset of $\mathbb{C} ^{n_r \times T}$ can be defined. It is easy to verify that the minimum distance of the effective constellation at Rx becomes zero when $Row(\mathbf{H})$ is a subspace of the vector subspace $Col ^{\perp}\left (\mathbf{C\left(\Delta x\right)}\right),$ for some $\mathbf{\Delta x} \in \Delta \mathcal{S}^K.$ Note that $Col ^{\perp}\left (\mathbf{C\left(\Delta x\right)}\right)$ denotes the vector subspace $\lbrace \mathbf{u} : \mathbf{u^T v}=0, \forall \mathbf{v} \in Col \left (\mathbf{C\left(\Delta x\right)}\right) \rbrace.$ The vector subspaces $Col^{\perp}\left (\mathbf{C\left(\Delta x\right)}\right)$ are the singular fade subspaces for the $n_t$ transmit antenna system with the space time code $\mathcal{C}.$

\begin{note}
Even though the probability that $Row(\mathbf{H})$ is a subspace of one of the singular fade subspaces is zero, with a non-zero probability $Row(\mathbf{H})$ falls in the neighbourhood of a subspace of one of the singular fade subspaces, which results in low values of the minimum distance of the effective constellation.
\end{note}

The dimension of the singular fade subspace $Col^{\perp}\left (\mathbf{C\left(\Delta x\right)}\right)$ is equal to $n_t-\text{rank}(\mathbf{C\left(\Delta x\right)}),$ while the transmit diversity order for the pair-wise error event $(\mathbf{x} \rightarrow \mathbf{x'}), \mathbf{x, x'} \in \mathcal{S}^{K}$ equals $\text{rank}(\mathbf{C\left(\Delta x\right)})$ \cite{TaSeCa}, where $\mathbf{\Delta x}=\mathbf{x}-\mathbf{x'}.$ With every pair-wise error event $(\mathbf{x} \rightarrow \mathbf{x'}),$ we can associate a singular fade subspace $Col^{\perp}\left (\mathbf{C\left(\Delta x\right)}\right).$ Among all the pair-wise error events, those error events for which the codeword difference matrix has the least rank determine the overall system transmit diversity order. Equivalently, among all the pair-wise error events, those error events for which the associated singular fade subspace has the largest dimension will dominate the overall error probability. This is expected since among all the singular fade subspaces, the probability that $Row(\mathbf{H})$  falls in the neighbourhood of a subspace of the singular fade subspace, will be the largest for those singular fade subspaces which have the largest dimension.

If the space time code is such that $\mathbf{C\left(\Delta x\right)}$ is full rank for all $\mathbf{\Delta x} \neq \mathbf{0_{K}},$ all the singular fade subspaces $Col^{\perp}\left (\mathbf{C\left(\Delta x\right)}\right)$ collapse to be the zero-dimensional trivial singular fade subspace {\footnotesize$\left \langle \begin{bmatrix}0 \\0  \end{bmatrix} \right\rangle,$} thereby ensuring a transmit diversity order of $n_t$ for all the pair-wise error events.

\begin{example}
Consider the $2 \times 1$ MISO system with Alamouti space time code whose design matrix is given by {\footnotesize$\begin{bmatrix}x_1 & x_2 \\ -x_2^*& x_1 ^*\end{bmatrix}.$} Since the design matrix is full rank for all choices of $x_1$ and $x_2,$ the column space of the codeword difference matrix is always $\mathbb{C}^2$ and hence all the singular fade subspaces collapse to be the zero dimensional trivial singular fade subspace {\footnotesize$\left\langle\begin{bmatrix} 0 \\ 0 \end{bmatrix}\right\rangle.$} Equivalently, all the pair-wise error events $(\mathbf{x_1,x_2}) \rightarrow (\mathbf{x'_1,x'_2})$ have a transmit diversity order 2. The full rank Alamouti space-time code removed the effect of all the vector subspaces which were non-trivial singular fade subspaces for the spatial multiplexing system, thereby increasing the diversity order of all the pair-wise error events from 1 to 2.
\end{example}

\begin{example}
For a $2^a \times 2^a$ Generalized Linear Complex Orthogonal Design (GCOD) \cite{SuXia}, the design matrix $\mathbf{G_{2^a}}(x_1,x_2,\dotso x_{a+1})$ constructed iteratively is given by,  $$\begin{bmatrix} \mathbf{G_{2^{a-1}}}(x_1,x_2,\dotso x_{a}) & x_{a+1} \mathbf{I_{2^{a-1}}} \\ -x^*_{a+1} \mathbf{I_{2^{a-1}}} & \mathbf{G^H_{2^{a-1}}}(x_1,x_2,\dotso x_{a})\end{bmatrix}.$$ The codeword difference matrix for the GCOD is full rank for any signal set. Hence, irrespective of the signal set, the trivial singular fade subspace $\left \langle \mathbf{0_{2^a}} \right \rangle$ is the only singular fade subspace for the GCOD.\\
\end{example}
\begin{example}
Consider the $4 \times 4$ Quasi-Orthogonal Design (QOD) \cite{HaJa}, whose codeword matrix is given by {$\begin{bmatrix} x_1 &-x_2^*&-x_3^*& x_4\\x_2 &x_1^*&-x_4^*& -x_3 \\ x_3 &-x_4^*&x_1^*& -x_2 \\x_4 &x_3^*&x_2^*& x_1 \end{bmatrix}.$ Let $\Delta x_i = x_i-x'_i.$} Irrespective of the signal set used, the minimum rank of the codeword difference matrix for the $4 \times 4$ QOD is 2.  For example, when $\Delta x_1= \Delta x_4 = \Delta s_1$ and $\Delta x_2= -\Delta x_3 = \Delta s_2,$ the rank of the codeword difference matrix is 2. Equivalently, there exists a non-trivial singular fade subspace, {\footnotesize $\left \langle \begin{bmatrix} \Delta s_1 & \Delta s_2 & -\Delta s_2 & \Delta s_1 \end{bmatrix}^T, \begin{bmatrix}  -\Delta s_2^* & \Delta s_1^* & -\Delta s_1^* & -\Delta s_2^*\end{bmatrix}^T\right \rangle^{\perp}.$}

\end{example}

Note that the 2 $\times 2$ Alamouti code removes the effect of the harmful non-trivial singular fade subspaces for any signal set. On the other hand, for the $4 \times 4$ QOD there exists non-trivial singular fade subspaces for any signal set.

 In general, a space time code can offer full transmit diversity for some but not all signal sets. In other words, for some signal set, a space time code might have only the trivial singular fade subspace, while for some other signal set, the same space time code might have non-trivial singular fade subspaces. For a space time code which does not offer full transmit diversity for a signal set, there would exist non-trivial singular fade subspaces. These are illustrated in the following example.

\begin{example}
\label{example_2_CIOD}
Consider the $2 \times 2$ Co-ordinate Interleaved Orthogonal Design (CIOD) \cite{KhRa} whose codeword matrices are of the form $\begin{bmatrix}x_1^R+jx_2^I & 0 \\0 & x_2^R+jx_1^I \end{bmatrix},$ where $x_1, x_2 \in \lbrace \pm 1, \pm j \rbrace.$ Let $\Delta x_i = x_i-x'_i.$ The codeword difference matrix is not full rank in the following two cases: \\
\textit{Case 1:} $\Delta x_1^R= \Delta x_2^I=0$ and at least one out of $\Delta x_1^I$ and  $\Delta x_2^R$ is non-zero. For this case, the singular fade subspace is given by $\left\langle \begin{bmatrix} 1 \\ 0 \end{bmatrix} \right\rangle.$ \\
\textit{Case 2:}
$\Delta x_1^I= \Delta x_2^R=0$ and at least one out of $\Delta x_1^R$ and $\Delta x_2^I$ is non-zero. For this case, the singular fade subspace is given by $\left\langle \begin{bmatrix} 0 \\ 1 \end{bmatrix} \right\rangle.$ \\Hence, there exists the following two non-trivial singular fade subspaces: {\footnotesize $\left\langle \begin{bmatrix} 0 \\ 1 \end{bmatrix} \right\rangle$ and $\left\langle \begin{bmatrix} 1 \\ 0 \end{bmatrix} \right\rangle.$} However, when the signal set is $e^{j \theta}\lbrace \pm 1, \pm j \rbrace,$ where $\theta$ is not a multiple of $\frac{\pi}{4},$ the $2 \times 2$ CIOD offers full transmit diversity. Equivalently, there are no singular fade subspaces other than the trivial singular fade subspace for the $2 \times 2$ CIOD with the signal set $e^{j \theta}\lbrace \pm 1, \pm j \rbrace,$ when $\theta$ is not a multiple of $\frac{\pi}{4}.$   
%
\end{example}

\begin{example}
\label{example_4_CIOD}
Consider the $4 \times 4$ CIOD \cite{KhRa} whose codeword matrices are of the form $$\begin{bmatrix}x_1^R+jx_3^I & x_2^R+jx_4^I & 0 & 0\\-x_2^R+jx_4^I & x_1^R-jx_3^I & 0 & 0 \\ 0& 0&x_3^R+jx_1^I & x_2^R+jx_4^I \\0 & 0 &-x_4^R+jx_2^I & x_3^R-jx_1^I \end{bmatrix},$$ where $x_1, x_2, x_3, x_4 \in \lbrace \pm 1, \pm j \rbrace.$ For the 4-PSK signal set considered, this STC does not offer full transmit diversity and there are pair-wise error events which have a transmit diversity order less than 2. The determinant of the codeword difference matrix for this STC is given by, 

{\vspace{-.2 cm} \footnotesize \begin{align*}&\left(\vert\Delta {x_1^R}\vert^2+\vert\Delta {x_3^I}\vert^2 +\vert\Delta {x_2^R}\vert^2+\vert\Delta {x_4^I}\vert^2 \right) \left(\vert\Delta {x_3^R}\vert^2+\vert\Delta {x_1^I}\vert^2 +\vert\Delta {x_4^R}\vert^2+\vert\Delta {x_2^I}\vert^2 \right). \end{align*}} Hence the code-word difference matrix is not full rank in the following two cases: \\
\textit{Case 1:}
 $\Delta {x_1^R}=\Delta {x_3^I}=\Delta {x_2^R}=\Delta {x_4^I}=0$ and at least one out of $\Delta {x_1^I},\Delta {x_3^R},\Delta {x_2^I},\Delta {x_4^R}$ is non-zero. For this case, the first two columns of the codeword difference matrices are zeros. The column span of the codeword difference matrix is {\footnotesize $\left \langle \begin{bmatrix} 0 &0& 1& 0 \end{bmatrix}^T,\begin{bmatrix} 0 &0& 0& 1 \end{bmatrix}^T\right\rangle$} and hence the corresponding singular fade subspace is given by {\footnotesize $\left \langle \begin{bmatrix} 1 &0& 0& 0 \end{bmatrix}^T,\begin{bmatrix} 0 &1& 0& 0 \end{bmatrix}^T\right\rangle.$} \\
 \textit{Case 2:}
 $\Delta {x_1^I}=\Delta {x_3^R}=\Delta {x_2^I}=\Delta {x_4^R}= 0$ and at least one out of $\Delta {x_1^I},\Delta {x_3^Q},\Delta {x_2^I},\Delta {x_4^Q}$ is non-zero. Similar to \textit{Case 1}, it can be shown that the singular fade subspace for this case is given by {\footnotesize $\left \langle \begin{bmatrix} 0 &0& 1& 0 \end{bmatrix}^T,\begin{bmatrix} 0 &0& 0& 1 \end{bmatrix}^T\right\rangle.$} 
 
Hence, for the $4 \times 4$ CIOD, with 4-PSK signal set $x_1, x_2, x_3, x_4 \in \lbrace \pm 1, \pm j \rbrace,$ there exists two non-trivial singular singular fade subspaces. Similar to the $2 \times 2$ CIOD, when the signal set is a rotated 4-PSK signal set, $e^{j \theta}\lbrace \pm 1, \pm j \rbrace,$ where $\theta$ is not a multiple of $\frac{\pi}{4},$ the $4 \times 4$ CIOD offers full transmit diversity and there are no non-trivial singular fade subspaces. \\
\end{example}

In general, the $2 \times 2$ CIOD given in Example \ref{example_2_CIOD} and the $4\times4$ CIOD given in Example \ref{example_4_CIOD}, offer full diversity for those signal sets for which the Co-ordinate Product Distance (CPD) \footnote{The CPD between two complex numbers $x$ and $y$ is defined to be $\vert x^R-y^R\vert 
\vert x^I - y^I \vert.$ The CPD of a signal set is defined to minimum among all CPDs between pairs of points in the signal set \cite{KhRa}.} is non-zero \cite{KhRa}. Equivalently, there are non-trivial singular fade subspaces for the $2 \times 2$ and $4 \times 4$ CIOD, for signal sets whose CPD is non-zero. In fact, this is true for any Generalized Co-ordinate Interleaved Orthogonal Design (GCIOD), as illustrated in the next example. \\
\begin{example}
Consider the $2^a \times 2^a$ Generalized Co-ordinate Interleaved Orthogonal Design (GCIOD) \cite{KhRa} whose codeword design matrix is given by, {\footnotesize $\begin{bmatrix} G_{2^{a-1}}(\tilde{x}_1,\dotso,\tilde{x}_a) & 0 \\0 & G_{2^{a-1}}(\tilde{x}_{a+1},\dotso,\tilde{x}_{2a}) \end{bmatrix}.$} The complex number $\tilde{x}_i=x_i^R+j x^I_{(i+a)_{2a}},$ where $(r)_{s}$ denotes $r$ modulo $s$ and $G_{2^{a-1}}({x}_1,\dotso,{x}_a)$ is the codeword matrix of the GCOD \cite{SuXia} of size $2^{a-1}.$ The determinant of the codeword difference matrix is given by {\footnotesize$\left(\sum_{i=1}^a(\vert \Delta x^R_{i} \vert^2+\vert \Delta x^I_{(a+i)_{2a}} \vert ^2\right)\left(\sum_{i=1}^a(\vert \Delta x^I_{i} \vert^2+\vert \Delta x^R_{(a+i)_{2a}} \vert ^2\right).$} The determinant is non-zero for those signal sets for which the CPD is non-zero and there are no non-trivial singular fade subspaces. For those signal sets for which the CPD is zero, the determinant becomes zero under the following two cases:

\textit{Case 1:}
$\Delta x^I_{i} =\Delta x^R_{(a+i)_{2a}}=0, \forall i \in \lbrace 1, \dotso,a \rbrace$ and at least one of the elements of the set {\footnotesize$\lbrace \Delta x^R_{i},\Delta x^I_{(a+i)_{2a}},$ $1 \leq i \leq a \rbrace$} is non-zero. It can be verified that the singular fade subspace for this case is given by $\left \langle e_1,e_2,\dotso,e_{2^{a-1}} \right \rangle,$ where $e_i$ denotes the $2^a$ length vector whose $i^{th}$ component is one and all other components are zeros. \\
\textit{Case2:}
$\Delta x^R_{i} =\Delta x^I_{(a+i)_{2a}}=0, \forall i \in \lbrace 1, \dotso,a \rbrace$ and at least one of the elements of the set {\footnotesize$\lbrace \Delta x^I_{i},\Delta x^R_{(a+i)_{2a}},$  $1 \leq i \leq a \rbrace$} is non-zero. For this case, the singular fade subspace is given by $\left \langle e_{2^{a-1}+1},\dotso,e_{2^{a}} \right \rangle.$
\end{example}

\section{Singular fade subspaces for the two-way relaying scenario}
In the previous subsection, the notion of singular fade subspaces was introduced and its connection to the transmit diversity order of the MIMO system with collocated antennas was established. Since the MA phase of the two-way relaying scenario can be viewed as a virtual $2 \times 1$ MISO system, there exists singular fade subspaces for this case as well. 

In Subsection \ref{sing_DNF}, the singular fade subspaces for the two-way relaying scenario are identified. The reason why the adaptive network coding schemes based on the DNF protocol proposed in \cite{APT1} and \cite{NVR}- \cite{VNR} mitigate the effect of these harmful singular fade subspaces is discussed. In Subsection \ref{sing_DSTC}, it is shown that minimizing the harmful effect of these singular fade subspaces can also be achieved by a proper choice of the DSTC, without any need to adaptively change the network code at R according to channel conditions.
\subsection{Singular Fade Subspaces for the DNF Protocol}
\label{sing_DNF}
%

  Let $\Delta x_A=x_A-x'_A$ and $\Delta x_B= x_B-x'_B \in \Delta \mathcal{S}.$ From the discussion in Section \ref{sing_MIMO}, it follows that the singular fade subspaces for the DNF protocol are of the form {\footnotesize $\left\langle\begin{bmatrix}\Delta x_A \\ \Delta x_B \end{bmatrix}\right\rangle ^{\perp}=\left\langle\begin{bmatrix} 1 \\ \frac{-\Delta x_A}{ \Delta x_B} \end{bmatrix}\right\rangle.$} 
The ratio $\frac{-\Delta x_A}{ \Delta x_B}$ determines all the singular fade subspaces for the DNF protocol. In \cite{NVR}-\cite{VNR}, the ratio $\frac{-\Delta x_A}{ \Delta x_B}$ was called the \textit{singular fade state}.
 
As mentioned earlier in Section \ref{sing_MIMO}, {\footnotesize $\mathrm{dim}\left(\left\langle\begin{bmatrix} 1 \\ \frac{-\Delta x_A}{ \Delta x_B} \end{bmatrix}\right\rangle\right)$} and the diversity order for the pair-wise error event that a pair $(x_A,x_B)$ is wrongly decoded at R as $(x'_A,x'_B)$ (denoted as $(x_A,x_B) \rightarrow (x'_A,x'_B)$) are inherently connected. The diversity order for the error event $(x_A,x_B) \rightarrow (x'_A,x'_B)$ is equal to $\text{rank}([\Delta x_A \: \Delta x_B])=1$ while {\footnotesize$\mathrm{dim}\left(\left\langle\begin{bmatrix} 1 \\ \frac{-\Delta x_A}{ \Delta x_B} \end{bmatrix}\right\rangle\right)=2-\text{rank}([\Delta x_A \: \Delta x_B])=1.$} 

Let $\mathcal{S}_R(h_A,h_B)=\lbrace h_A \tilde{x}_A +h_B \tilde{x}_B: \tilde{x}_A,\tilde{x}_B \in \mathcal{S}\rbrace$ denote the effective constellation at R. Let $d_{min}(h_A,h_B)$ denote the minimum distance of $\mathcal{S}_R(h_A,h_B).$ When $[h_A \; h_B]^T$ falls in one of the singular fade subspaces, $d_{min}(h_A,h_B)$ becomes zero.
Even though the  probability that the vector $[h_A \: h_B]^T$ belongs to a singular fade subspace is zero, $d_{min}(h_A,h_B)$ is greatly reduced when $[h_A \: h_B]^T$ falls close to a singular fade subspace, a phenomenon referred as \textit{distance shortening}. For $\Delta x_A \neq 0$ and $\Delta x_B \neq 0,$ the CNC algorithm \cite{APT1} avoids the distance shortening occurring in the neighbourhood of a singular fade subspace {\footnotesize$\left\langle\begin{bmatrix}\Delta x_A \\ \Delta x_B \end{bmatrix}\right\rangle ^{\perp},$} by ensuring that $\mu^{h_A,h_B}(x_A,x_B)=\mu^{h_A,h_B}(x'_A,x'_B),$ i.e., R does not distinguish the pairs $(x_A,x_B)$ and $(x'_A,x'_B),$ which are said to be clustered together. In fact, for every realization of $[h_A h_B]$ (not necessarily in the neighbourhood of singular fade subspaces), the CNC algorithm chooses the network coding map which results in the best distance profile at R by appropriate clustering of the signal points. The scheme proposed in \cite{NVR}-\cite{VNR} avoids distance shortening in the neighbourhood of singular fade subspaces by proper choice of clustering for only the singular fade subspaces and not for every realization of the channel fade coefficients. 

Consider the two singular fade subspaces: {\footnotesize$\left\langle\begin{bmatrix}0\\ \Delta x_B \end{bmatrix}\right\rangle ^{\perp}=\left\langle\begin{bmatrix} 1 \\ 0 \end{bmatrix}\right\rangle$} and {\footnotesize$\left\langle\begin{bmatrix}\Delta x_A \\ 0 \end{bmatrix}\right\rangle ^{\perp}=\left\langle\begin{bmatrix} 0 \\ 1 \end{bmatrix}\right\rangle.$} The distance shortening which occurs in the neighbourhood of these singular fade subspaces is unavoidable, since the pairs $(x_A,x_B)$ and $(x_A,x'_B)$ (and also the pairs $(x_A,x_B)$ and $(x'_A,x_B)$) which result in these singular fade subspaces cannot be clustered together without violating the exclusive law. Such singular fade subspaces are referred as the \textit{non-removable singular fade subspaces}. The dimension of these singular fade subspaces is one or equivalently, the error events $(x_A,x_B) \rightarrow (x_A,x'_B)$ and $(x_A,x_B) \rightarrow (x'_A,x_B)$ always result in diversity order one. The singular fade subspaces other than the non-removable singular fade subspaces are referred as the \textit{removable singular fade subspaces}. 

The removable singular fade subspaces are of the form {\footnotesize$\left\langle\begin{bmatrix} 1 \\ \frac{-\Delta x_A}{ \Delta x_B} \end{bmatrix}\right\rangle, \Delta x_A \neq 0, \Delta x_B \neq 0,$} which are dependent on the signal set $\mathcal{S}$ used. The non-removable singular fade subspaces are {\footnotesize$\left\langle\begin{bmatrix} 1 \\ 0 \end{bmatrix}\right\rangle$} and {\footnotesize$\left\langle\begin{bmatrix}0 \\ 1 \end{bmatrix}\right\rangle,$} which are independent of the signal set used. Owing to the presence of non-removable singular fade subspaces, the overall diversity order of the DNF protocol cannot exceed one. 

From the discussion above, it is clear that there are two classes of singular fade subspaces: removable and non-removable. The non-removable singular fade spaces are created by the channel and is independent of the signal set used. Whatever may be the choice of the network code, the harmful effects of these non-removable singular fade subspaces cannot be mitigated. The harmful effect of the removable singular fade subspaces, which are created by the signal set, can be removed by a proper choice of the adaptive network coding map at R, as in \cite{APT1} and \cite{NVR}--\cite{VNR}. 

To sum up, in the DNF protocol, the transmissions from the nodes A and B are allowed to interfere at R and the effect of MAI is effectively mitigated by adaptively changing the network coding map, thereby removing the harmful effect of all the removable singular fade subspaces.

\subsection{Singular Fade Subspaces for the DSTC Scheme}
\label{sing_DSTC}
%

Let $\mathbf{\Delta {x_A}}=\mathbf{x_A}-\mathbf{x'_A}$ and  $\mathbf{\Delta {x_B}}=\mathbf{x_B}-\mathbf{x'_B} \in \Delta \mathcal{S}^2.$ Then $\mathbf{ C(\mathbf{\Delta} \mathbf{x_A},\mathbf{\Delta} \mathbf{x_B})}=\mathbf{C(\mathbf{x_A},\mathbf{x_B})}-\mathbf{C(\mathbf{x'_A},\mathbf{x'_B})}$ denotes a codeword difference matrix of the DSTC, where $\mathbf{C(\mathbf{x_A},\mathbf{x_B})}$ is the codeword matrix of the DSTC defined in \eqref{DSTC_design}. 
From the discussion in Section \ref{sing_MIMO}, it follows that the singular fade spaces for the proposed DSTC scheme are of the form  $Col^{\perp}\left(\mathbf{ C\left(\mathbf{\Delta} \mathbf{x_A},\mathbf{\Delta} \mathbf{x_B}\right)}\right).$

  


Consider the singular fade subspaces of the form $Col^{\perp}\left(\mathbf{ C\left(0_2,\mathbf{\Delta} \mathbf{x_B}\right)}\right)$ and $Col^{\perp}\left(\mathbf{ C\left(\mathbf{\Delta} \mathbf{x_A},0_2\right)}\right),$ where $\mathbf{\Delta x_A, \Delta x_B \neq 0_2}.$ The first row of the matrix $\mathbf{ C\left(0_2,\mathbf{\Delta} \mathbf{x_B}\right)}$ has both the entries to be zero. Hence, $Col\left(\mathbf{ C\left(0_2,\mathbf{\Delta} \mathbf{x_B}\right)}\right)={\footnotesize \left\langle \begin{bmatrix} 0 \\ 1 \end{bmatrix}\right\rangle}$ and the singular fade subspace $Col^{\perp}\left(\mathbf{ C\left(0_2,\mathbf{\Delta} \mathbf{x_B}\right)}\right)={\footnotesize\left\langle \begin{bmatrix} 1 \\ 0 \end{bmatrix}\right\rangle}.$ By a similar reasoning, $Col^{\perp}\left(\mathbf{C\left(\mathbf{\Delta} \mathbf{x_A},0_2\right)}\right)={\footnotesize\left\langle \begin{bmatrix} 0 \\ 1 \end{bmatrix}\right\rangle.}$ 

If the DSTC codeword matrices are such that  $\text{rank}(\mathbf{ C(\Delta x_A, \Delta x_B)})=2,$ $\forall \mathbf{\Delta x_A} \neq \mathbf{0_{2}}$ and $\mathbf{\Delta x_B} \neq \mathbf{0_{2}},$ all the singular fade subspaces $Col^{\perp}\left(\mathbf{ C\left(\mathbf{\Delta} \mathbf{x_A},\mathbf{\Delta} \mathbf{x_B}\right)}\right)$ collapse to be the trivial singular fade subspace $\left \langle\mathbf{0_2}\right \rangle.$ Equivalently, all the pair-wise error events $\mathbf{C(x_A,x_B)} \rightarrow \mathbf{C(x'_A,x'_B)}, \mathbf{x_A} \neq \mathbf{x'_A},\mathbf{x_B} \neq \mathbf{x'_B} ,$ have diversity order 2. Hence, for a properly chosen DSTC, other than the trivial singular fade subspace, the singular fade subspaces are only the two non-removable singular fade subspaces, while for the DNF protocol, in addition, we had the removable singular fade subspaces.  In this way, by a proper choice of DSTC, the occurrence of the removable singular fade subspaces is avoided at the transmitting nodes itself, without any CSIT.

Hence, we have the following design criterion referred as the \textit{singularity minimization criterion} for DSTCs for two-way relaying: 
\textit{The DSTC codeword difference matrices $\mathbf{C(\Delta x_A, \Delta x_B)}$ need to be full rank for all $\mathbf{\Delta x_A} \neq \mathbf{0_2}$ and $\mathbf{\Delta x_B} \neq \mathbf{0_2},$} to minimize the number of singular fade subspaces. DSTCs satisfying the above criterion are referred as the \textit{singularity minimal} DSTCs.

 Hence for a DSTC which is singularity minimal, the only error events which result in diversity order 1 are of the form $\mathbf{C(x_A,x_B)} \rightarrow \mathbf{C(x_A,x'_B)}, \mathbf{x'_B} \neq \mathbf{x_B}$ and $\mathbf{C(x_A,x_B)} \rightarrow \mathbf{C(x'_A,x_B)}, \mathbf{x'_A} \neq \mathbf{x_A}.$ Hence, the overall coding gain is equal to minimum among all the non-zero singular values of the codeword difference matrices which are of the form $\mathbf{C(0_{2},\Delta x_B)}$ and $\mathbf{C(\Delta x_A,0_{2})}$ \cite{TaSeCa}. Note that the matrices $\mathbf{C(0_{2},\Delta x_B)}$ and $\mathbf{C(\Delta x_A,0_{2})}$ are of rank 1 and have only one non-zero singular value. We have the following coding gain criteria for singularity minimal DSTCs:
\textit{the minimum among all the non-zero singular values of the codeword difference matrices which are of the form $\mathbf{C(0_{2},\Delta x_B)}$ and $\mathbf{C(\Delta x_A,0_{2})}$ needs to be maximized}. 
\begin{example}
Consider the DSTC $\begin{bmatrix}x_{A_1} & 0 \\0 & x_{B_1}\end{bmatrix}.$ This DSTC is nothing but the scheme where A and B transmit in separate time slots, making sure that their transmissions do not interfere at the relay. Even though this DSTC avoids all the removable singular fade subspaces, the end-to-end rate in complex symbols per channel use is less than that of the DNF protocol.
\end{example} 
\subsection{A Construction of Singularity Minimal DSTCs for Algebraic Signal Sets}
A signal set is said to be algebraic if all the signal points of the signal set are algebraic numbers over $\mathbb{Q}$ \footnote{A number is said to be algebraic over $\mathbb{Q}$ if there exists a polynomial with coefficients from $\mathbb{Q}$ of which the number is a root. If there does not exist a polynomial with coefficients from $\mathbb{Q}$ of which the number is a root, the number is said to be transcendental \cite{NJ}.}. All the commonly used signal sets like QAM and PSK are algebraic signal sets. In this subsection, a class of DSTCs which are singularity minimal for algebraic signal sets is provided. 
Let {\footnotesize$\begin{bmatrix} a & b \\ c &d \end{bmatrix}$} be a full rank complex matrix. Consider the class of DSTCs whose codeword matrices are of the form  $\mathbf{C( x_A, x_B)}={\footnotesize\begin{bmatrix} a (x_{A_1}+e^j x_{A_2}) & b (x_{A_1}+e^j x_{A_2}) \\ c (x_{B_1}+e^j x_{B_2}) & d (x_{B_1}+e^j x_{B_2}) \end{bmatrix}}.$ 

\begin{proposition}
The class of DSTCs whose codeword design matrices are of the form given above are singularity minimal for all algebraic signal sets.
\begin{proof}
The proof is as follows: For $\mathbf{\Delta x_A} \neq \mathbf{0_2}$ and $\mathbf{\Delta x_B} \neq \mathbf{0_2},$ at least one of the two components of $\mathbf{\Delta x_A}$ as well as $\mathbf{\Delta x_B}$ should be non-zero. Hence, $(\Delta x_{A_1}+e^j \Delta x_{A_2}) \neq 0$ and $(\Delta x_{B_1}+e^j \Delta x_{B_2}) \neq 0,$ since $e^j$ is transcendental \footnote{By Lindemann-Weierstrass theorem \cite{NJ}, $e^{j q}$ is transcendental for all $q \in \mathbb{Q}$.} whereas $\Delta x_{A_1},\Delta x_{A_2},\Delta x_{B_1}$ and $\Delta x_{B_2}$ are algebraic over $\mathbb{Q}.$  The codeword difference matrix $\mathbf{C( \Delta x_A, \Delta x_B)}$ is full rank for all $\mathbf{\Delta x_A} \neq 0$ and $\mathbf{\Delta x_B} \neq 0,$ since $\text{det}(\mathbf{C( \Delta x_A, \Delta x_B)})=(ad-bc)(\Delta x_{A_1}+e^j \Delta x_{A_2})(\Delta x_{B_1}+e^j \Delta x_{B_2}) \neq 0.$
\end{proof}
\end{proposition}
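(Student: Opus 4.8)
The plan is to verify the singularity minimization criterion directly, i.e. to show that $\det\left(\mathbf{C(\mathbf{\Delta} \mathbf{x_A},\mathbf{\Delta} \mathbf{x_B})}\right) \neq 0$ whenever $\mathbf{\Delta x_A} \neq \mathbf{0_2}$ and $\mathbf{\Delta x_B} \neq \mathbf{0_2}.$ First I would exploit the special rank-one structure of each row block: every entry of the first row carries the common scalar $(x_{A_1}+e^j x_{A_2})$ and every entry of the second row carries $(x_{B_1}+e^j x_{B_2}),$ so by linearity of the difference the first row of the codeword difference matrix equals $(\Delta x_{A_1}+e^j \Delta x_{A_2})\begin{bmatrix} a & b \end{bmatrix}$ and the second row equals $(\Delta x_{B_1}+e^j \Delta x_{B_2})\begin{bmatrix} c & d \end{bmatrix}.$ Pulling these scalars out of the respective rows of the determinant gives the factorization $\det\left(\mathbf{C(\mathbf{\Delta} \mathbf{x_A},\mathbf{\Delta} \mathbf{x_B})}\right)=(ad-bc)(\Delta x_{A_1}+e^j \Delta x_{A_2})(\Delta x_{B_1}+e^j \Delta x_{B_2}).$

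The first factor $ad-bc$ is nonzero by the standing assumption that $\begin{bmatrix} a & b \\ c & d \end{bmatrix}$ is full rank, so the argument reduces to showing that neither linear form $\Delta x_{A_1}+e^j \Delta x_{A_2}$ nor $\Delta x_{B_1}+e^j \Delta x_{B_2}$ can vanish. I would treat the A-form, the B-form being identical. Since $\mathbf{\Delta x_A} \neq \mathbf{0_2},$ at least one of $\Delta x_{A_1},\Delta x_{A_2}$ is nonzero; if $\Delta x_{A_2}=0$ the form collapses to $\Delta x_{A_1} \neq 0$ and there is nothing to prove, so the only interesting case is $\Delta x_{A_2} \neq 0.$

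This case is the crux, and it is where the algebraic hypothesis on the signal set enters. If the form vanished with $\Delta x_{A_2}\neq 0,$ one could solve $e^j=-\Delta x_{A_1}/\Delta x_{A_2}.$ But each $\Delta x_{A_i}$ is a difference of two points of an algebraic signal set, hence algebraic over $\mathbb{Q},$ and the algebraic numbers form a field, so the right-hand side would itself be algebraic over $\mathbb{Q}.$ This contradicts the transcendence of $e^j,$ which is exactly the instance $q=1$ of the Lindemann--Weierstrass theorem. I expect this number-theoretic step to be the only genuine obstacle: the determinant factorization is purely formal linear algebra driven by the rank-one row structure, but the non-vanishing of the scalar factors cannot be argued inside the code alone and instead requires playing the algebraicity of the signal-set differences against the transcendence of $e^j.$ A secondary point to keep in mind, which I would flag but not belabor, is that the non-vanishing must hold simultaneously for \emph{all} algebraic signal sets, so the argument should never invoke any property of a particular constellation beyond algebraicity of its difference set.
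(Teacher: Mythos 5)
Your proposal is correct and follows essentially the same route as the paper's own proof: the same factorization $\det\left(\mathbf{C(\Delta x_A,\Delta x_B)}\right)=(ad-bc)(\Delta x_{A_1}+e^j \Delta x_{A_2})(\Delta x_{B_1}+e^j \Delta x_{B_2})$, with non-vanishing of the scalar factors obtained by playing the algebraicity of the signal-set differences against the transcendence of $e^j$ via Lindemann--Weierstrass. You merely spell out details the paper leaves implicit (the explicit case split on $\Delta x_{A_2}$ and the closure of the algebraic numbers under division), so the two arguments are the same in substance.
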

\begin{example}
\label{DSTC_1}
Consider the case when {\footnotesize$\begin{bmatrix}a &b \\ c & d \end{bmatrix}=\begin{bmatrix}1 &0 \\ 0 & 1 \end{bmatrix}.$} Let 4-PSK be the signal set used at A and B. The DSTC codeword matrix for this case is given by, {\footnotesize$\mathbf{C( x_A, x_B)}=\begin{bmatrix} (x_{A_1}+e^j x_{A_2}) & 0 \\ 0 & (x_{B_1}+e^j x_{B_2})\end{bmatrix}.$} A and B are made to transmit in two different time slots which results in low decoding complexity at R, since A's and B's transmissions can be decoded independently. It can be verified that the coding gain for this DSTC is approximately 0.6877.
\end{example}
\begin{example}
\label{DSTC_2}
Consider the case when {\footnotesize$\begin{bmatrix}a &b \\ c & d \end{bmatrix}=\begin{bmatrix}{1} &{1} \\ -1 &{1} \end{bmatrix}.$} Let 4-PSK be the signal set used at A and B. The DSTC codeword matrix for this case is given by, $\mathbf{C( x_A, x_B)}={\footnotesize \frac{1}{\sqrt{2}}\begin{bmatrix} (x_{A_1}+e^j x_{A_2}) & (x_{A_1}+e^j x_{A_2}) \\ -(x_{B_1}+e^j x_{B_2}) & (x_{B_1}+e^j x_{B_2})\end{bmatrix}}.$ The scaling factor of $\frac{1}{\sqrt{2}}$ is to ensure unit average energy per symbol per time slot. It can be verified that the coding gain for this DSTC is approximately 0.6877, same as that of the DSTC given in Example \ref{DSTC_1}.
\end{example}

The coding gain of the DSTCs given in Examples \ref{DSTC_1} and \ref{DSTC_2} is approximately 0.6877, which is less than the minimum distance of the unit energy 4-PSK signal set, which is $\sqrt{2}.$ In the next section, it is shown that for DSTCs over square QAM and $2^{\lambda}$-PSK signal sets, the coding gain is upper bounded by the minimum distance of the signal set and explicit DSTC constructions which achieve this bound with equality are provided.
\section{Singularity Minimal, Coding Gain Maximal DSTCs over QAM and PSK signal sets}
In this section, it is shown that the coding gain of the DSTCs over square QAM and $2^{\lambda}$-PSK signal sets are upper-bounded by the minimum distance of the signal set. In Subsection  \ref{DSTC_unitary}, a condition under which a singularity minimal DSTC over square QAM and $2^{\lambda}$-PSK signal set meets the upper bound with equality is obtained and explicit constructions of DSTCs are provided. In Subsection \ref{DSTC_unitary_dc}, the constructed DSTC's are shown to be fast ML decodable, i.e., the ML decoding complexity of the constructed DSTCs is shown to be less than the brute-force decoding complexity which is $O(M^4).$

Note that the generator matrices $\mathbf{M_A}$ and $\mathbf{M_B}$ at A and B should be such that the average energy per time slot is unity, i.e., $\mathbb{E}(\parallel \mathbf{x_A} \mathbf{M_A} \parallel ^2) \leq 2$ and $\mathbb{E}(\parallel \mathbf{x_B} \mathbf{M_B} \parallel ^2) \leq 2.$

\begin{lemma}
For singularity minimal DSTC over $\mathcal{S},$ where $\mathcal{S}$ is a square QAM or $2^{\lambda}$-PSK signal set, the coding gain is upper bounded by the minimum distance of the signal set $\mathcal{S}.$ 
\begin{proof}
See Appendix A.
\end{proof}
\end{lemma}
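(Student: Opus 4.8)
The plan is to prove the upper bound by exhibiting, among the rank-one codeword difference matrices $\mathbf{C}(\mathbf{\Delta x_A},\mathbf{0_2})$ and $\mathbf{C}(\mathbf{0_2},\mathbf{\Delta x_B})$ whose non-zero singular values define the coding gain, at least one whose single non-zero singular value is at most $d_{min}(\mathcal{S})$. Since the coding gain is, by the characterization established just before the statement, the minimum of these singular values over all such difference matrices, producing a single matrix of singular value at most $d_{min}(\mathcal{S})$ already pins the coding gain below $d_{min}(\mathcal{S})$; by symmetry it is enough to work on the $A$ side.

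First I would translate the energy constraint $\mathbb{E}(\parallel \mathbf{x_A}\mathbf{M_A}\parallel^2)\le 2$ into a deterministic bound on $\mathbf{M_A}$. Writing $\mathbf{m_1}$ and $\mathbf{m_2}$ for the two rows of $\mathbf{M_A}$, we have $\mathbf{x_A}\mathbf{M_A}=x_{A_1}\mathbf{m_1}+x_{A_2}\mathbf{m_2}$, so expanding the squared norm and taking expectations produces the diagonal terms $\parallel \mathbf{m_1}\parallel^2+\parallel \mathbf{m_2}\parallel^2$ together with a cross term proportional to $\mathbb{E}(x_{A_1}x_{A_2}^*)=\mathbb{E}(x_{A_1})\mathbb{E}(x_{A_2}^*)$. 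Because square QAM and $2^\lambda$-PSK are zero-mean, this cross term vanishes and, using $\mathbb{E}(\vert x_{A_i}\vert^2)=1$, we obtain $\mathbb{E}(\parallel \mathbf{x_A}\mathbf{M_A}\parallel^2)=\parallel \mathbf{m_1}\parallel^2+\parallel \mathbf{m_2}\parallel^2\le 2$; in particular $\min(\parallel \mathbf{m_1}\parallel,\parallel \mathbf{m_2}\parallel)\le 1$.

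Next I would choose $\Delta s\in\Delta\mathcal{S}$ with $\vert\Delta s\vert=d_{min}(\mathcal{S})$ and form the two difference vectors $\mathbf{\Delta x_A}=[\Delta s\;\; 0]$ and $\mathbf{\Delta x_A}=[0\;\; \Delta s]$, which both lie in $\Delta\mathcal{S}^2$. For these, $\mathbf{\Delta x_A}\mathbf{M_A}$ equals $\Delta s\,\mathbf{m_1}$ and $\Delta s\,\mathbf{m_2}$ respectively, so the single non-zero singular value of the rank-one matrix $\mathbf{C}(\mathbf{\Delta x_A},\mathbf{0_2})$ equals $d_{min}(\mathcal{S})\parallel \mathbf{m_1}\parallel$ and $d_{min}(\mathcal{S})\parallel \mathbf{m_2}\parallel$. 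Taking the smaller of the two and invoking $\min(\parallel \mathbf{m_1}\parallel,\parallel \mathbf{m_2}\parallel)\le 1$, the coding gain is at most $d_{min}(\mathcal{S})\min(\parallel \mathbf{m_1}\parallel,\parallel \mathbf{m_2}\parallel)\le d_{min}(\mathcal{S})$, which is the claim.

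The only genuinely non-trivial step is the first one: converting the averaged power constraint into the row-norm inequality $\parallel \mathbf{m_1}\parallel^2+\parallel \mathbf{m_2}\parallel^2\le 2$. This is precisely where the restriction to square QAM and $2^\lambda$-PSK enters, since their zero-mean property is what annihilates the cross-correlation $\mathbb{E}(x_{A_1}x_{A_2}^*)$; for a constellation with non-zero mean an extra term would survive and the clean bound need no longer hold. Everything after this is a direct computation with rank-one matrices, and the symmetric argument on the $B$ side is not even required.
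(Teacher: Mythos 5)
Your proof is correct, and its skeleton is the same as the paper's: both arguments probe the coding gain with the axis-aligned differences $[\Delta s \;\; 0]$ and $[0 \;\; \Delta s]$, $\vert \Delta s \vert = d_{min}(\mathcal{S})$, and both convert the average-energy constraint into a deterministic bound using the fact that $\mathbb{E}(x_{A_1}x_{A_2}^*)=0$ for square QAM and $2^{\lambda}$-PSK. The difference is in execution. The paper first diagonalizes $\mathbf{M_A}\mathbf{M_A^H}=\mathbf{U_A}\mathbf{\Lambda_A}\mathbf{U_A^H}$, writes the two relevant norms as $\vert u_{A_{11}}\vert^2\lambda_{A_1}+\vert u_{A_{12}}\vert^2\lambda_{A_2}$ and $\vert u_{A_{21}}\vert^2\lambda_{A_1}+\vert u_{A_{22}}\vert^2\lambda_{A_2}$, and maximizes their minimum over the unitary parameter (optimum at $\vert u_{A_{11}}\vert^2=\tfrac{1}{2}$) to land on $d^2_{min}(\mathcal{S})\frac{\lambda_{A_1}+\lambda_{A_2}}{2}\leq d^2_{min}(\mathcal{S})$. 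You bypass the eigendecomposition entirely: the two quantities above are exactly the squared row norms $\parallel \mathbf{m_1}\parallel^2$ and $\parallel \mathbf{m_2}\parallel^2$ of $\mathbf{M_A}$, the trace constraint is $\parallel \mathbf{m_1}\parallel^2+\parallel \mathbf{m_2}\parallel^2\leq 2$, and $\min(a,b)\leq \frac{a+b}{2}\leq 1$ finishes the job. Your route is shorter and makes the mechanism transparent; the paper's route makes the equality condition explicit (equal row norms summing to $2$), which foreshadows Proposition 2 on unitary generators, but that condition is equally readable off your formulation. One small point you should add: to speak of \emph{the} non-zero singular value of $\mathbf{C}([\Delta s \;\; 0],\mathbf{0_2})$ you need $\mathbf{m_1}\neq \mathbf{0_2^T}$, since otherwise this matrix is zero and contributes no singular value to the minimization. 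This is guaranteed by the lemma's hypothesis: if $\mathbf{m_1}=\mathbf{0_2^T}$, then any codeword difference matrix with $\mathbf{\Delta x_A}=[\Delta s \;\; 0]$ and $\mathbf{\Delta x_B}\neq \mathbf{0_2}$ has an all-zero first row, hence rank at most one, contradicting singularity minimality.
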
 
In the following subsection, the condition under which the upper-bound given in the previous lemma is satisfied with equality is identified and explicit construction of DSTCs are provided.
\subsection{Constructions of Singularity Minimal, Coding Gain Maximal DSTCs over QAM and PSK signal sets}
\label{DSTC_unitary}
The following proposition states that for DSTCs over $\mathcal{S},$ choosing $\mathbf{M_A}$ and $\mathbf{M_B}$ to be unitary matrices ensures that the upper-bound on the coding gain is satisfied with equality, for QAM and PSK signal sets.
\begin{proposition}
For singularity minimal DSTCs over square QAM or $2^{\lambda}$-PSK signal sets, the coding gain is  maximized when the generator matrices $\mathbf{M_A}$ and $\mathbf{M_B}$ at A and B  are unitary matrices.
\end{proposition}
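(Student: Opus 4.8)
The plan is to make the coding gain of a singularity minimal DSTC over $\mathcal{S}$ fully explicit in terms of the generator matrices, and then to show that unitary generators drive it up to the upper bound furnished by the preceding lemma. By the coding gain criterion established in Subsection~\ref{sing_DSTC}, the coding gain equals the smallest non-zero singular value taken over the rank-one difference matrices $\mathbf{C(0_{2},\Delta x_B)}$ and $\mathbf{C(\Delta x_A,0_{2})}$. For a DSTC over $\mathcal{S}$ one has $\mathbf{C(0_{2},\Delta x_B)}=\begin{bmatrix}\mathbf{0_2}\\ \mathbf{\Delta x_B}\mathbf{M_B}\end{bmatrix}$, a matrix whose first row vanishes; its sole non-zero singular value is therefore the Euclidean norm of the surviving row, namely $\parallel \mathbf{\Delta x_B}\mathbf{M_B}\parallel$. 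By the symmetric argument the non-zero singular value of $\mathbf{C(\Delta x_A,0_{2})}$ is $\parallel \mathbf{\Delta x_A}\mathbf{M_A}\parallel$. Hence the coding gain is
\[
\min\Big(\min_{\mathbf{\Delta x_A}\neq\mathbf{0_2}}\parallel\mathbf{\Delta x_A}\mathbf{M_A}\parallel,\ \min_{\mathbf{\Delta x_B}\neq\mathbf{0_2}}\parallel\mathbf{\Delta x_B}\mathbf{M_B}\parallel\Big).
\]

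Second, I would invoke the norm-preserving property of unitary matrices. If $\mathbf{M_A}$ is unitary then $\parallel\mathbf{\Delta x_A}\mathbf{M_A}\parallel=\parallel\mathbf{\Delta x_A}\parallel$ for every row vector $\mathbf{\Delta x_A}$, and likewise for $\mathbf{M_B}$, so the minimization decouples entirely from the generator matrices. Writing $\mathbf{\Delta x_A}=[\Delta x_{A_1}\ \Delta x_{A_2}]$ with each $\Delta x_{A_i}\in\Delta\mathcal{S}$, any non-zero $\mathbf{\Delta x_A}$ has at least one component of magnitude at least $d_{min}(\mathcal{S})$, and this lower bound is attained by taking one component to be a minimum-distance difference and the other zero; therefore $\min_{\mathbf{\Delta x_A}\neq\mathbf{0_2}}\parallel\mathbf{\Delta x_A}\parallel=d_{min}(\mathcal{S})$. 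Consequently, with unitary $\mathbf{M_A}$ and $\mathbf{M_B}$ the coding gain equals $d_{min}(\mathcal{S})$, which is precisely the upper bound of the previous lemma, so it is maximal.

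Third, I would confirm feasibility through the power normalization. Since $\mathcal{S}$ has unit energy and $\mathbf{M_A}$ is unitary, $\mathbb{E}(\parallel\mathbf{x_A}\mathbf{M_A}\parallel^2)=\mathbb{E}(\parallel\mathbf{x_A}\parallel^2)=2$, so the constraint $\mathbb{E}(\parallel\mathbf{x_A}\mathbf{M_A}\parallel^2)\leq 2$ holds with equality, and identically for node B. This is the crux of the argument: the lemma's bound is meaningful only under the energy constraint (otherwise the coding gain could be inflated arbitrarily by scaling the generators), and unitary matrices are exactly those that simultaneously exhaust the energy budget and preserve the minimum distance of $\mathcal{S}$.

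The principal subtlety I anticipate is not any heavy computation but the bookkeeping around the two reductions: correctly arguing that each rank-one codeword difference matrix contributes a single non-zero singular value equal to the norm of its unique non-zero row, and keeping the energy constraint active so that \emph{maximal} is measured against the lemma's bound rather than against an unconstrained family of generators. It is also worth remarking that unitarity is established here only as a sufficient condition attaining the bound; the explicit constructions in the following subsection exhibit unitary $\mathbf{M_A},\mathbf{M_B}$ that are moreover singularity minimal, confirming that the two requirements are simultaneously satisfiable.
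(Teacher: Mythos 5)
Your proposal is correct and follows essentially the same route as the paper: express the coding gain as $\min\bigl(\min_{\mathbf{\Delta x_A}\neq\mathbf{0_2}}\parallel\mathbf{\Delta x_A}\mathbf{M_A}\parallel,\ \min_{\mathbf{\Delta x_B}\neq\mathbf{0_2}}\parallel\mathbf{\Delta x_B}\mathbf{M_B}\parallel\bigr)$, use norm preservation of unitary matrices to show this equals $d_{min}(\mathcal{S})$, and conclude by matching the upper bound of the preceding lemma. Your extra steps (justifying the singular value of the rank-one difference matrices as the norm of the non-zero row, and verifying the energy constraint is met with equality) are sound elaborations of points the paper states without proof, not a different argument.
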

\begin{proof}
When $\mathbf{M_A}$ and $\mathbf{M_B}$ are unitary matrices, $\parallel \mathbf{\Delta x_A M_A} \parallel=\parallel \mathbf{\Delta x_A} \parallel$ and also $\parallel \mathbf{\Delta x_B M_B} \parallel=$ $\parallel \mathbf{\Delta x_B} \parallel.$  Hence, $\displaystyle{\min_{\substack {{\mathbf{\Delta x_A} \in \Delta \mathcal{S}^2,}\\{ \mathbf{\Delta x_A} \neq 0_2}}} \parallel \mathbf{\Delta x_A}\mathbf{M_A} \parallel= \min_{\substack{{\Delta x_{A_1}} \in \Delta \mathcal{S}},\\{\Delta x_{A_1} \neq 0}}\vert \Delta x_{A_1} \vert=d_{min}(\mathcal{S})}$ and similarly \\$\displaystyle{\min_{\substack {{\mathbf{\Delta x_B} \in \Delta \mathcal{S}^2,}\\{ \mathbf{\Delta x_B} \neq 0_2}}} \parallel \mathbf{\Delta x_B}\mathbf{M_B} \parallel=d_{min}(\mathcal{S})},$ where $d_{min}(\mathcal{S})$ denotes the minimum distance of $\mathcal{S}.$
 
  The coding gain of the DSTC is the minimum among all the non-zero singular values of the codeword difference matrices which are of the form $\mathbf{C(0_{2},\Delta x_B)}$ and $\mathbf{C(\Delta x_A,0_{2})},$ i.e., the coding gain is equal to 
 {\footnotesize $\displaystyle{\min \left\lbrace \min_{\substack {{\mathbf{\Delta x_A} \in \Delta \mathcal{S}^2,}\\{ \mathbf{\Delta x_A} \neq 0_2}}} \parallel \mathbf{\Delta x_A}\mathbf{M_A} \parallel, \min_{\substack {{\mathbf{\Delta x_B} \in \Delta \mathcal{S}^2,}\\{ \mathbf{\Delta x_B} \neq 0_2}}} \parallel \mathbf{\Delta x_B}\mathbf{M_B} \parallel \right \rbrace,}$} which is equal to $d_{min}(\mathcal{S}).$
\end{proof}
In the following examples, constructions of singularity minimal DSTCs whose generator matrices are unitary are provided.
\begin{construction}
\label{DSTC_ldc1}
Consider the DSTC over $\mathcal{S}$ for which {\footnotesize$\mathbf{M_A}=\frac{1}{\sqrt{5}}\begin{bmatrix}\alpha & \bar{\alpha} \\ {\alpha\phi} & \bar{\alpha} \bar{\phi}\end{bmatrix}$} and {\footnotesize$\mathbf{M_B}=\frac{1}{\sqrt{5}}\begin{bmatrix}j\alpha & \bar{\alpha} \\ j{\alpha\phi} & \bar{\alpha} \bar{\phi}\end{bmatrix},$} where $\phi=\frac{1+\sqrt{5}}{2},$ $\bar{\phi}=\frac{1-\sqrt{5}}{2},$ $\alpha=1+j-j\phi$ and $\bar{\alpha}=1+j-j \bar{\phi}.$ The DSTC codeword matrix is of the form $\mathbf{C(x_A,x_B)}={\footnotesize\begin{bmatrix}\mathbf{x_A}{\mathbf{M_A}} \\\mathbf{x_B}{\mathbf{M_B}} \end{bmatrix}}.$ The codeword difference matrix $\mathbf{C(\Delta x_A, \Delta x_B)}$ is full rank for all $\mathbf{\Delta x_A}\neq 0$ and $\mathbf{\Delta x_B}\neq 0,$ when the signal points belong to $\mathbb{Z}[j]$ \cite{MaBe}. Hence the DSTC is singularity minimal for all signal sets whose signal points belong to $\mathbb{Z}[j].$ Also, since $\mathbf{M_A}$ and $\mathbf{M_B}$ are unitary, for square QAM signal set, the DSTC maximizes the coding gain.
\end{construction}
\begin{note}
The DSTC given in Construction \ref{DSTC_ldc1} was constructed in \cite{MaBe} towards satisfying the design criterion formulated in \cite{GaBo} for the two-user non-cooperative Multiple Access Channel (MAC). In \cite{MaBe}, the DSTC given in the above example was shown to be DMT optimal for two-user MAC.
\end{note}
\begin{construction}
\label{DSTC_ldc2}
Consider the DSTC for which {\footnotesize$\mathbf{M_A}=\mathbf{I_{2}}$} and {$\mathbf{M_B}=\begin{bmatrix}\cos\phi_g & -\sin\phi_g e^{j\theta}\\ \sin\phi_g & \cos\phi_g e^{j \theta} \end{bmatrix},$} where $\phi_g=\tan^{-1} \sqrt{5}.$ The DSTC codeword matrix $\mathbf{C( x_A,  x_B)}$ is given by, {\vspace{-.2 cm} \footnotesize $$\begin{bmatrix} x_{A_1} & x_{A_2} \\ x_{B_1} \cos\phi_g+ x_{B_2}\sin\phi_g & e^{j \theta} (-x_{B_1}\sin\phi_g+ x_{B_2}\cos\phi_g) \end{bmatrix}.$$} For a complex number $a,$ let $\mathbb{Q}(a)$ denote the smallest field containing $\mathbb{Q}$ and $a.$ It is shown in Lemma \ref{const2_proof} below that choosing $\theta=\frac{\pi}{4}$ ensures singularity minimality for signal sets (for example QAM) whose signal points belong to $\mathbb{Q}(j)$ and choosing $\theta=\frac{\pi}{2^{\lambda}}$ ensures singularity minimality for signal sets (for example $2^{\lambda}$-PSK) whose signal points belong to $\mathbb{Q}(e^{j \frac{2 \pi}{2^{\lambda}}}).$ Also, since $\mathbf{M_A}$ and $\mathbf{M_B}$ are unitary, this DSTC maximizes the coding gain, for square QAM and $2^{\lambda}$-PSK signal sets. The advantage of this construction over Construction \ref{DSTC_ldc1} is that encoding at node A is simple, since it does not involve any linear combination of $x_{A_1}$ and $x_{A_2}.$
\end{construction}
\begin{lemma}
\label{const2_proof}
For the DSTC given in construction \ref{DSTC_ldc2}, choosing $\theta=\frac{\pi}{4}$ ensures singularity minimality for signal sets whose points belong to $\mathbb{Q}(j)$ and choosing $\theta=\frac{\pi}{2^{\lambda}}$ ensures singularity minimality for signal sets whose signal points belong to $\mathbb{Q}(e^{j \frac{2 \pi}{2^{\lambda}}}).$
\begin{proof}
The proof is given for the case when the signal points belong to $\mathbb{Q}(j).$ The proof for the case when the signal points belong to $\mathbb{Q}(e^{j \frac{2 \pi}{2^{\lambda}}})$ is exactly similar and is omitted.\\
Let $\Delta x_{A_i}=x_{A_i}-x'_{A_i}$ and $\Delta x_{B_i}=x_{B_i}-x'_{B_i},$ where $x_{A_i},x'_{A_i},x_{B_i},x'_{B_i} \in \mathcal{S} \subset \mathbb{Q}(j).$ and $i \in \lbrace 1,2 \rbrace.$ To prove singularity minimality, it needs to be shown that when at least one out of $\Delta x_{A_1}$ and $\Delta x_{A_2}$ ($\Delta x_{B_1}$ and $\Delta x_{B_2}$) is non-zero, the codeword difference matrix is full rank. The ratios $\frac{\Delta x_{B_1}}{\Delta x_{B_2}}$ and $-\frac{\Delta x_{B_2}}{\Delta x_{B_1}}$ belong to $\mathbb{Q}(j)$ while $\tan \phi_g= \sqrt{5}$ does not belong to $\mathbb{Q}(j).$ Hence, $\Delta x_{B_1} \cos\phi_g+ \Delta x_{B_2}\sin\phi_g \neq 0$ and $-\Delta x_{B_1}\sin\phi_g+ \Delta x_{B_2}\cos\phi_g \neq 0.$ Since $\sin \phi_g=\frac{\sqrt{5}}{\sqrt{6}}$ and $\cos \phi_g=\frac{1}{\sqrt{6}},$ $\Delta x_{A_2} (\Delta x_{B_1} \cos\phi_g+ \Delta x_{B_2}\sin\phi_g)$ and $\Delta x_{A_1}(-\Delta x_{B_1} \sin\phi_g+ \Delta x_{B_2}\cos\phi_g)$ belong to $\mathbb{Q}(j,\sqrt{5},\sqrt{6}),$ where $\mathbb{Q}(j,\sqrt{5},\sqrt{6})$ denotes the smallest filed containing $\mathbb{Q},j,\sqrt{5}$ and $\sqrt{6}.$ The determinant of the codeword difference matrix is given by, $$\Delta x_{A_1}e^{j\frac{\pi}{4}}(-\Delta x_{B_1} \sin\phi_g+ \Delta x_{B_2}\cos\phi_g)-\Delta x_{A_2} (\Delta x_{B_1} \cos\phi_g+ \Delta x_{B_2}\sin\phi_g).$$ The determinant is non-zero since the ratio $\frac{\Delta x_{A_2} (\Delta x_{B_1} \cos\phi_g+ \Delta x_{B_2}\sin\phi_g)}{\Delta x_{A_1}(-\Delta x_{B_1} \sin\phi_g+ \Delta x_{B_2}\cos\phi_g)}$ belongs to $\mathbb{Q}(j,\sqrt{5},\sqrt{6}),$ while $e^{j \frac{\pi}{4}}$ does not belong to $\mathbb{Q}(j,\sqrt{5},\sqrt{6}).$
\end{proof}
\end{lemma}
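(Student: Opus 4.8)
The plan is to prove singularity minimality straight from its definition: I must show that the codeword difference matrix $\mathbf{C(\Delta x_A, \Delta x_B)}$ has rank $2$ whenever $\mathbf{\Delta x_A} \neq \mathbf{0_2}$ and $\mathbf{\Delta x_B} \neq \mathbf{0_2}$, which for a $2\times2$ matrix is equivalent to a non-vanishing determinant. Writing $\Delta x_{A_i}=x_{A_i}-x'_{A_i}$ and $\Delta x_{B_i}=x_{B_i}-x'_{B_i}$ with all symbols drawn from $\mathcal{S}\subset\mathbb{Q}(j)$, I would first record the determinant of the matrix in Construction \ref{DSTC_ldc2} explicitly (a routine expansion),
$$\det \mathbf{C(\Delta x_A, \Delta x_B)} = \Delta x_{A_1}\, e^{j\theta}\,(-\Delta x_{B_1}\sin\phi_g + \Delta x_{B_2}\cos\phi_g) - \Delta x_{A_2}\,(\Delta x_{B_1}\cos\phi_g + \Delta x_{B_2}\sin\phi_g),$$
and then argue it cannot be zero for any admissible difference pair.

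The natural first sub-step is to show that the two entries of the second row, namely $\Delta x_{B_1}\cos\phi_g + \Delta x_{B_2}\sin\phi_g$ and $-\Delta x_{B_1}\sin\phi_g + \Delta x_{B_2}\cos\phi_g$, are \emph{each} non-zero. Since $\mathbf{\Delta x_B}\neq\mathbf{0_2}$, if either expression vanished then $\Delta x_{B_1}/\Delta x_{B_2}$ (respectively its negative reciprocal) would equal $\pm\tan\phi_g=\pm\sqrt5$; but those ratios lie in $\mathbb{Q}(j)$ whereas $\sqrt5\notin\mathbb{Q}(j)$, a contradiction. This observation also disposes of the degenerate cases $\Delta x_{A_1}=0$ and $\Delta x_{A_2}=0$: in each case exactly one term of the determinant survives and it is a product of non-zero factors (using $\cos\phi_g,\sin\phi_g,e^{j\theta}\neq0$ together with $\mathbf{\Delta x_A}\neq\mathbf{0_2}$).

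The substance lies in the remaining case $\Delta x_{A_1}\neq0$ and $\Delta x_{A_2}\neq0$. Here I would substitute $\sin\phi_g=\sqrt5/\sqrt6$ and $\cos\phi_g=1/\sqrt6$, so that both products $\Delta x_{A_2}(\Delta x_{B_1}\cos\phi_g+\Delta x_{B_2}\sin\phi_g)$ and $\Delta x_{A_1}(-\Delta x_{B_1}\sin\phi_g+\Delta x_{B_2}\cos\phi_g)$ lie in the field $\mathbb{Q}(j,\sqrt5,\sqrt6)$ and, by the previous paragraph, are non-zero. Vanishing of the determinant would then force $e^{j\theta}$ to equal the ratio of these two elements of $\mathbb{Q}(j,\sqrt5,\sqrt6)$. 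With $\theta=\pi/4$ one has $e^{j\theta}=(1+j)/\sqrt2$, and since $j$ lies in the field the question collapses to the purely algebraic assertion that $e^{j\pi/4}\notin\mathbb{Q}(j,\sqrt5,\sqrt6)$, equivalently $\sqrt2\notin\mathbb{Q}(j,\sqrt5,\sqrt6)$.

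I expect this last membership fact to be the one point needing real care. I would settle it by observing that $\mathbb{Q}(j,\sqrt5,\sqrt6)$ is a multiquadratic extension of $\mathbb{Q}$ of degree $8$ with Galois group $(\mathbb{Z}/2\mathbb{Z})^3$, whose quadratic subfields are precisely the $\mathbb{Q}(\sqrt d)$ with $d$ a squarefree product of $-1,5,6$, i.e. $d\in\{-1,5,6,-5,-6,30,-30\}$; as $2$ is not among these, $\sqrt2$ is not in the field, hence neither is $e^{j\pi/4}$. For the PSK case the structure is identical after replacing $\mathbb{Q}(j)$ by the cyclotomic field $\mathbb{Q}(e^{j2\pi/2^\lambda})$ and $e^{j\pi/4}$ by $e^{j\pi/2^\lambda}=\zeta_{2^{\lambda+1}}$: the non-vanishing of the second-row entries now uses $\sqrt5\notin\mathbb{Q}(\zeta_{2^\lambda})$ (the $2$-power cyclotomic field is ramified only at $2$, and its unique real quadratic subfield is $\mathbb{Q}(\sqrt2)$), and the final step is a ramification argument showing the primitive $2^{\lambda+1}$-th root of unity $\zeta_{2^{\lambda+1}}$, whose extension over $\mathbb{Q}(\zeta_{2^\lambda})$ is ramified only at $2$, cannot lie in $\mathbb{Q}(\zeta_{2^\lambda},\sqrt5,\sqrt6)$, whose nontrivial quadratic extensions of $\mathbb{Q}(\zeta_{2^\lambda})$ are ramified at $3$ or $5$.
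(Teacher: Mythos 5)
Your proof is correct and follows essentially the same route as the paper's: expand the determinant, show the second-row entries are non-zero because $\sqrt{5}\notin\mathbb{Q}(j)$, and conclude that the determinant cannot vanish because $e^{j\pi/4}$ cannot equal a ratio lying in $\mathbb{Q}(j,\sqrt{5},\sqrt{6})$. You in fact go beyond the paper, which merely asserts $e^{j\pi/4}\notin\mathbb{Q}(j,\sqrt{5},\sqrt{6})$, leaves the degenerate cases $\Delta x_{A_1}=0$ or $\Delta x_{A_2}=0$ implicit, and omits the PSK case as ``exactly similar''; your Kummer-theoretic identification of the quadratic subfields of the multiquadratic field (reducing the claim to $\sqrt{2}\notin\mathbb{Q}(j,\sqrt{5},\sqrt{6})$) and your ramification argument for $\zeta_{2^{\lambda+1}}$ supply precisely the justifications the paper takes for granted.
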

\subsection{Decoding Complexity of Singularity Minimal, Maximal Coding Gain DSTCs  over $\mathcal{S}$}
\label{DSTC_unitary_dc}
After the two MA phases, R jointly decodes for the two message vectors $\mathbf{x_A}$ and $\mathbf{x_B}$ of A and B respectively. In general, the complexity of this joint ML decoding at R is $O(M^4),$ where $M$ is the cardinality of the signal set $\mathcal{S}.$ The choice of the generator matrices $\mathbf{M_A}$ and $\mathbf{M_B}$ being unitary not only maximizes the coding gain for QAM and PSK signal sets, but also results in a reduced decoding complexity at R.
 
 The following proposition states that when conditional ML decoding \cite{BiHoVi}, \cite{SrRa} is employed, the decoding complexity of the DSTCs constructed in the previous section for which the generator matrices $\mathbf{M_A}$ and $\mathbf{M_B}$ are unitary is $O(M^3)$ for any arbitrary signal set and is $O(M^2)$ for square QAM signal set. Note that the brute force decoding complexity is $O(M^4).$ 

\begin{proposition}
When the generator matrices of the singularity minimal DSTC over $\mathcal{S}$ are unitary, the decoding complexity using conditional ML decoding is $O(M^3)$ when the signal set $\mathcal{S}$ is arbitrary and is $O(M^2)$ when the signal set $\mathcal{S}$ is square QAM.
\begin{proof}
See Appendix B.
\end{proof}
\end{proposition}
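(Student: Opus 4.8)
The plan is to work directly with the joint ML metric at R. With the codeword $\mathbf{C(x_A,x_B)}=\begin{bmatrix}\mathbf{x_A}\mathbf{M_A}\\ \mathbf{x_B}\mathbf{M_B}\end{bmatrix}$ and received vector $\mathbf{y_R}=h_A\mathbf{x_A}\mathbf{M_A}+h_B\mathbf{x_B}\mathbf{M_B}+[z_{R_1}\; z_{R_2}]$, the metric is $m(\mathbf{x_A},\mathbf{x_B})=\parallel \mathbf{y_R}-h_A\mathbf{x_A}\mathbf{M_A}-h_B\mathbf{x_B}\mathbf{M_B}\parallel^2$. Brute-force minimization ranges over the $M^2$ choices of $\mathbf{x_A}$ and the $M^2$ choices of $\mathbf{x_B}$, giving $O(M^4)$. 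Conditional ML decoding \cite{BiHoVi},\cite{SrRa} instead enumerates the $M^2$ values of $\mathbf{x_A}$, subtracts its contribution to form the residual $\mathbf{r}=\mathbf{y_R}-h_A\mathbf{x_A}\mathbf{M_A}$, and minimizes $\parallel \mathbf{r}-h_B\mathbf{x_B}\mathbf{M_B}\parallel^2$ over $\mathbf{x_B}$. The whole argument is to show that this inner minimization is cheap, $O(M)$ for arbitrary $\mathcal{S}$ and $O(1)$ for square QAM, so that the total cost is $M^2$ times the inner cost.

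The key step is the decoupling induced by unitarity of $\mathbf{M_B}$. Expanding the inner metric as $\parallel\mathbf{r}\parallel^2 + |h_B|^2\parallel\mathbf{x_B}\mathbf{M_B}\parallel^2 - 2\,\mathrm{Re}\big(h_B^*\,\mathbf{r}\,\mathbf{M_B}^H\,\mathbf{x_B}^H\big)$, the only term that could couple $x_{B_1}$ and $x_{B_2}$ is the quadratic one; since $\mathbf{M_B}$ is unitary, $\parallel\mathbf{x_B}\mathbf{M_B}\parallel^2=\parallel\mathbf{x_B}\parallel^2=|x_{B_1}|^2+|x_{B_2}|^2$ carries no cross term, and writing $\mathbf{p}=\mathbf{r}\mathbf{M_B}^H=[p_1\; p_2]$ the linear term splits as $-2\sum_{i}\mathrm{Re}(h_B^* p_i x_{B_i}^*)$. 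Hence, given $\mathbf{x_A}$, the metric becomes a sum of two per-symbol metrics $f_i(x_{B_i})=|h_B|^2|x_{B_i}|^2-2\,\mathrm{Re}(h_B^* p_i x_{B_i}^*)$, each minimized independently by an $O(M)$ search over $\mathcal{S}$. This gives $O(M)$ per conditioning and the first claim $O(M^3)$ for arbitrary $\mathcal{S}$. I would note that only unitarity of $\mathbf{M_B}$ is used here; by symmetry one may instead condition on $\mathbf{x_B}$ and invoke unitarity of $\mathbf{M_A}$, and both generator matrices are unitary by construction in any case.

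For the square-QAM refinement I would push the decoupling one level further, down to real and imaginary parts. Writing $h_B^* p_i=w_i^R+jw_i^I$, each per-symbol metric separates as $f_i=\big[|h_B|^2 (x_{B_i}^R)^2-2w_i^R x_{B_i}^R\big]+\big[|h_B|^2 (x_{B_i}^I)^2-2w_i^I x_{B_i}^I\big]$, a sum of two independent one-dimensional quadratics in $x_{B_i}^R$ and $x_{B_i}^I$. For square QAM the in-phase and quadrature coordinates range independently over a $\sqrt{M}$-PAM, so each quadratic is minimized by quantizing $w_i^R/|h_B|^2$ (resp. $w_i^I/|h_B|^2$) to the nearest PAM point, an $O(1)$ hard decision. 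The inner minimization therefore collapses to four scalar slicings, i.e. $O(1)$ per conditioning, and the overall complexity is $O(M^2)$.

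Finally I would observe that the conditional minimization is exact: for each fixed $\mathbf{x_A}$ it returns the true $\min_{\mathbf{x_B}} m(\mathbf{x_A},\mathbf{x_B})$, so keeping the running best over the $M^2$ values of $\mathbf{x_A}$ (an $O(1)$ bookkeeping step) recovers the global ML solution without affecting the order. The main obstacle, and the only place real care is needed, is the claim that the cross term between a node's two symbols vanishes: it is precisely unitarity of $\mathbf{M_B}$ (equivalently $\mathbf{M_B}\mathbf{M_B}^H$ diagonal) that removes the $\mathrm{Re}(x_{B_1}x_{B_2}^*)$ coupling in the quadratic term, and the further separation into real and imaginary parts relies on the identity $|x|^2=(x^R)^2+(x^I)^2$ together with the independent PAM ranges of the two QAM coordinates, a property that fails for a generic signal set, which is exactly why the arbitrary-$\mathcal{S}$ bound remains $O(M^3)$.
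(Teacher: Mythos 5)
Your proof is correct, and it reaches the paper's conclusion by a genuinely different, more elementary route. The paper lifts the system to a real-valued equivalent channel $\mathbf{\tilde{y}_R}=\mathbf{H_{eq}}\tilde{\mathbf{x}}+\mathbf{\tilde{z}_R}$ with $\mathbf{H_{eq}}\in\mathbb{R}^{4\times 8}$, performs a QR decomposition $\mathbf{H_{eq}}=\mathbf{QR}$, and invokes Hurwitz--Radon orthogonality of the weight matrices (a consequence of the unitarity of $\mathbf{M_A}$ and $\mathbf{M_B}$, via Theorem 2 of \cite{SrRa}) to show that the upper-left $4\times 4$ block of $\mathbf{R}$ is diagonal; conditioning on $(x_{B_1},x_{B_2})$ then decouples $x_{A_1}$ from $x_{A_2}$ (and, for square QAM, their real and imaginary parts, which are decoded by rounding), giving $2M$ (resp. $O(1)$) work per conditioning. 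You obtain exactly the same decoupling by a direct complex-domain expansion of the conditional metric: unitarity of $\mathbf{M_B}$ kills the $\mathrm{Re}(x_{B_1}x_{B_2}^*)$ cross term in $|h_B|^2\,\mathbf{x_B}\mathbf{M_B}\mathbf{M_B}^H\mathbf{x_B}^H$, and the linear term splits per symbol after projecting the residual by $\mathbf{M_B}^H$; the further real/imaginary split for square QAM is the same slicing argument as the paper's. (That you condition on $\mathbf{x_A}$ and decode $\mathbf{x_B}$ while the paper does the reverse is immaterial by symmetry.) Your route avoids both the real-valued lifting and the appeal to an external theorem, so it is more self-contained and makes transparent that only the unitarity (indeed, only the diagonality of $\mathbf{M_B}\mathbf{M_B}^H$) is doing the work. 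What the paper's QR machinery buys in exchange is the second half of its argument: there, singularity minimality is used to prove that the remaining entries of $\mathbf{R}$ are non-zero, i.e., that no further decoupling is possible for these codes --- a converse-type statement explaining why the complexity cannot be pushed below $O(M^3)$ (or $O(M^2)$ for QAM) by this method, which your argument does not address but which is also not needed for the big-$O$ upper bounds that the proposition actually asserts.
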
 

Compared with the DNF protocol, the decoding complexity is more for singularity minimal coding gain maximal DSTCs  over $\mathcal{S}.$ For the DNF protocol, the decoding complexity is $O(M^2)$ for non square QAM signal sets while it is $O(M)$ for square QAM signal set
\footnote{
For the DNF protocol, with QAM signal set, conditioning on $x_A,$ $x_B$ can be decoded with constant decoding complexity by rounding off to the nearest integer, which results in an overall decoding complexity of $O(M).$}. As indicated by the simulation results in the next section, the proposed DSTC offers slightly better performance than the adaptive network coding scheme and eliminates the need for adaptive switching of network coding maps at R. But this comes at the cost of increased decoding complexity at R. 

\section{Simulation Results}
  All the simulation results presented are for the case when the end nodes use 4-PSK signal set. By `DSTC 1' and `DSTC 2' we refer to the DSTCs given in Construction \ref{DSTC_ldc1} and Construction \ref{DSTC_ldc2} respectively. As a reference scheme, we consider the scheme in which XOR network code is used irrespective of channel conditions and no DSTC is employed, which is referred as `XOR N/W code'. Assuming unit noise variances at all the nodes, the average energies of the transmissions at the nodes, which are assumed to be equal, is defined to be the Signal to Noise Ratio (SNR).  The proposed DSTC scheme is also compared with the adaptive network coding schemes proposed in \cite{APT1} and \cite{NVR}-\cite{VNR}. Since for 4-PSK signal set, the adaptive network coding scheme based on the Nearest Neighbour Clustering (NNC) algorithm proposed in \cite{APT1} and the scheme based on Latin Squares proposed in \cite{NVR}-\cite{VNR} turn out to be the same, without distinguishing them we refer to both as `adaptive N/W code'. Fig. \ref{fig:ber_comp1} shows the SNR vs BER performance for different schemes for the case when all the fading coefficients are i.i.d. and Rayleigh distributed. In Fig. \ref{fig:ber_comp2} and Fig. \ref{fig:ber_comp3} similar plots are shown for a Rician fading scenario with Rician factors \footnote{Rician factor is the power ratio between the line of sight and scattered components.} of 0 dB and 5 dB respectively.  From Fig. \ref{fig:ber_comp1}-\ref{fig:ber_comp3}, it can be seen that the diversity order is one for all the schemes. Also, it can be seen that at high SNR, both `DSTC 1' as well as `DSTC 2' offer nearly the same performance and they perform better than the `XOR N/W code' as well as the `adaptive N/W code'. For a Rayleigh fading scenario, at high SNR, the DSTCs offer a gain of 2 dB over `XOR N/W code' while the `adaptive N/W code' offers a gain of about 0.5 dB over the `XOR N/W code'. For a Rician factor of 0 dB, at high SNR, the DSTCs offer a gain of 2 dB over `XOR N/W code' while the `adaptive N/W code' offers a gain of about 1.2 dB over the `XOR N/W code'. For a Rician factor of 5 dB, at high SNR, the DSTCs offer a gain of 5.5 dB over 'XOR N/W code' while the `adaptive N/W code' offers a gain of about 4 dB over the `XOR N/W code'. The reason why the DSTC based scheme performs better than the adaptive N/W coding scheme is as follows: during the BC phase always a 4 point signal set is used for the DSTC based scheme, while depending on channel conditions 4 point or 5 point signal set is used for the adaptive network coding scheme \cite{APT1},\cite{NVR}.  

\section{Discussion}
A DSTC scheme was proposed for the two-way relaying scenario. It was shown that deep channel fades occur when the channel fade coefficient vector falls in a finite number of vector subspaces called the singular fade subspaces. The connection between the dimension of these vector subspaces and the transmit diversity order was established. Design criterion to minimize the number of singular fade subspaces for the DSTC scheme and maximize the coding gain were obtained. Explicit low decoding complexity constructions of DSTCs were provided. The problem of constructing singularity minimal DSTCs with decoding complexity same as that of the DNF protocol, without sacrificing the coding gain, remains open. Extending the DSTC scheme for two-way relaying with multiple antennas and multi-way relaying are possible directions for future work.
\section*{Acknowledgement}
This work was supported  partly by the DRDO-IISc program on Advanced Research in Mathematical Engineering through a research grant as well as the INAE Chair Professorship grant to B.~S.~Rajan.


\begin{appendices}
\section{Proof of Lemma 1}
Since $\mathbf{M_A} \mathbf{M^H_A}$ is Hermitian, it is unitarily diagonalizable, i.e., {\small $\mathbf{M_A} \mathbf{M^H_A}= \mathbf{U_A} \mathbf{\Lambda_A} \mathbf{U^H_A},$} where $\mathbf{U_A}$ is a unitary matrix and  $\mathbf{\Lambda_A}$ is a diagonal matrix with diagonal entries denoted as $\lambda_{A_1}$ and $\lambda_{A_2}.$ Note that $\lambda_{A_1}$ and $\lambda_{A_2}$ are non-negative since $\mathbf{M_A} \mathbf{M^H_A}$ is positive semi-definite. 
Let $\mathbf{M_A}=\begin{bmatrix} a_{11} & a_{12} \\ a_{21} & a_{22}\end{bmatrix}.$ We have, 
{\footnotesize
$\mathbb{E}(\parallel \mathbf{x_A} \mathbf{M_A} \parallel ^2) =\vert a_{11} \vert ^2 \mathbb{E}(\vert x_{A_1} \vert ^2)+\vert a_{12} \vert ^2 \mathbb{E}(\vert x_{A_2} \vert ^2)+\vert a_{21} \vert ^2 \mathbb{E}(\vert x_{A_1} \vert ^2)+\vert a_{22} \vert ^2 \mathbb{E}(\vert x_{A_2} \vert ^2)=\vert  a_{11} \vert ^2+\vert  a_{12} \vert ^2+\vert  a_{21} \vert ^2+\vert  a_{22} \vert ^2,$
} since $\mathbb{E}(x_{A_1} x^*_{A_2})=\mathbb{E}(x_{A_2} x^*_{A_1})=0$ for square QAM and $2^{\lambda}$-PSK signal sets. Since $\mathbb{E}(\parallel \mathbf{x_A} \mathbf{M_A} \parallel ^2)\leq 2,$ we have  $\vert  a_{11} \vert ^2+\vert  a_{12} \vert ^2+\vert  a_{21} \vert ^2+\vert  a_{22} \vert ^2=Trace(\mathbf{M_A} \mathbf{M_A}^H)=\lambda_{A_1}+\lambda_{A_2}\leq 2.$

 The coding gain of the DSTC is the minimum among all the non-zero singular values of the codeword difference matrices which are of the form $\mathbf{C(0_{2},\Delta x_B)}$ and $\mathbf{C(\Delta x_A,0_{2})},$ i.e., the coding gain is equal to 
 {\footnotesize $\displaystyle{\min \left\lbrace \min_{\substack {{\mathbf{\Delta x_A} \in \Delta \mathcal{S}^2,}\\{ \mathbf{\Delta x_A} \neq 0_2}}} \parallel \mathbf{\Delta x_A}\mathbf{M_A} \parallel, \min_{\substack {{\mathbf{\Delta x_B} \in \Delta \mathcal{S}^2,}\\{ \mathbf{\Delta x_B} \neq 0_2}}} \parallel \mathbf{\Delta x_B}\mathbf{M_B} \parallel \right \rbrace.}$}
 
Let $d_{min}(\mathcal{S})$ denote the minimum distance of the signal set $\mathcal{S}.$ 
 
 Consider $\parallel \mathbf{\Delta x_A}\mathbf{M_A} \parallel^2= \Delta \mathbf{x_A} \mathbf{M_A} \mathbf{M_A}^H \Delta\mathbf{x_A}^H= \mathbf{\Delta \tilde{x}_A} \mathbf{\Lambda_A} \mathbf{\Delta \tilde{x}_A}^H=\lambda_{A_1} \vert \Delta \tilde{x}_{A_1} \vert^2 +\lambda_{A_2} \vert \Delta \tilde{x}_{A_2} \vert^2,$ where $\mathbf{\Delta \tilde{x}_A} =\mathbf{\Delta {x}_A} \mathbf{U_A}\triangleq[\Delta \tilde{x}_{A_1} \; \Delta \tilde{x}_{A_2}].$ 
 
 Let $\mathbf{u_{A_1}}=[u_{A_{11}} \; u_{A_{12}}]$ and  $\mathbf{u_{A_2}}=[u_{A_{21}} \; u_{A_{22}}]$ denote the rows of $\mathbf{U_A}.$ For $\mathbf{\Delta x_A}=[\Delta x_{A_1} \; 0],$ $\parallel \mathbf{\Delta {x}_A}\mathbf{M_A} \parallel^2= \vert \Delta x_{A_1} \vert^2(\vert u_{A_{11}}\vert ^2 \lambda_{A_1}+\vert u_{A_{12}}\vert ^2 \lambda_{A_2}).$ 
 
 Hence, we have, {\footnotesize $\displaystyle{\min_{\substack {{\mathbf{\Delta x_A} \in \Delta \mathcal{S}^2,}\\{ \mathbf{\Delta x_A} \neq 0_2}}} \parallel \mathbf{\Delta x_A}\mathbf{M_A} \parallel^2 \leq d^2_{min}(\mathcal{S}) (\vert u_{A_{11}}\vert ^2 \lambda_{A_1}+\vert u_{A_{12}}\vert ^2 \lambda_{A_2})}.$} 
  Similarly, we have, \\{\footnotesize $\displaystyle{\min_{\substack {{\mathbf{\Delta x_A} \in \Delta \mathcal{S}^2,}\\{ \mathbf{\Delta x_A} \neq 0_2}}} \parallel \mathbf{\Delta x_A}\mathbf{M_A} \parallel^2 \leq d^2_{min}(\mathcal{S})(\vert u_{A_{21}}\vert ^2 \lambda_{A_1}+\vert u_{A_{22}}\vert ^2 \lambda_{A_2}).}$}
 Since $\mathbf{U_A}$ is unitary $\vert u_{A_{11}} \vert ^2 = \vert u_{A_{22}} \vert ^2$ and $\vert u_{A_{12}} \vert ^2 = \vert u_{A_{21}} \vert ^2.$ Therefore, we have,
 
{\vspace{-.4 cm}
\footnotesize
\begin{align} 
\label{dmin_ub}
 \min_{\substack {{\mathbf{\Delta x_A} \in \Delta \mathcal{S}^2,}\\{ \mathbf{\Delta x_A} \neq 0_2}}} \parallel \mathbf{\Delta x_A}\mathbf{M_A} \parallel^2 &\leq d^2_{min}(\mathcal{S}) \min \lbrace(\vert u_{A_{11}}\vert ^2 \lambda_{A_1}+\vert u_{A_{12}}\vert ^2 \lambda_{A_2}),(\vert u_{A_{11}}\vert ^2 \lambda_{A_2}+\vert u_{A_{12}}\vert ^2 \lambda_{A_1}) \rbrace.
 \end{align}
 \vspace{-.4 cm}
}  

Since $\mathbf{U_A}$ is unitary, $\vert u_{A_{11}}\vert ^2=1-\vert u_{A_{12}}\vert ^2.$ For a given $\lambda_{A_1}$ and $\lambda_{A_2},$ the upper-bound in \eqref{dmin_ub} is maximized over all $\vert u_{A_{11}}\vert ^2$ when the two terms inside $\min$ are equal, i.e., $\vert u_{A_{11}}\vert ^2 \lambda_{A_1}+\vert u_{A_{12}}\vert ^2 \lambda_{A_2}=\vert u_{A_{11}}\vert ^2 \lambda_{A_2}+\vert u_{A_{12}}\vert ^2 \lambda_{A_1},$ for which $\vert u_{A_{11}}\vert ^2=\frac{1}{2}$ and this maximum value is equal to $d^2_{min}(\mathcal{S})\frac{(\lambda_{A_1}+\lambda_{A_2})}{2}.$ Since,  $ \lambda_{A_1}+\lambda_{A_2} \leq 2,$ the maximum value of the upper-bound in \eqref{dmin_ub} is less than or equal to $d^2_{min}(\mathcal{S}).$ Hence, $\displaystyle{\min_{\substack {{\mathbf{\Delta x_B} \in \Delta \mathcal{S}^2,}\\{ \mathbf{\Delta x_B} \neq 0_2}}} \parallel \mathbf{\Delta x_A}\mathbf{M_A} \parallel}\leq d_{min}(\mathcal{S}).$ Similarly, it can be shown that $\displaystyle{\min_{\substack {{\mathbf{\Delta x_A} \in \Delta \mathcal{S}^2,}\\{ \mathbf{\Delta x_B} \neq 0_2}}} \parallel \mathbf{\Delta x_B}\mathbf{M_B} \parallel}$ is also upper-bounded by $d_{min}(\mathcal{S}).$ Hence, the coding gain of the DSTC over square QAM or $2^{\lambda}$-PSK signal set is upper-bounded by $d_{min}(\mathcal{S}).$ This completes the proof.
\section{Proof of Proposition 3}
To prove the proposition, we adopt a procedure similar to the one used in \cite{SrRa}. 

Let $\mathbf{\tilde{y}_R}=[y_{R_1}^R \;y_{R_1}^I\;y_{R_2}^R \;y_{R_2}^I]^T,$ $\mathbf{\tilde{x}}=[x_{A_1}^R ~ x_{A_1}^I ~ x_{A_2}^R ~  x_{A_2}^I ~ x_{B_1}^R  ~ x_{B_1}^I ~ x_{B_2}^R ~ x_{B_2}^I]^T$ and $\mathbf{\tilde{z}_R}=[z_{R_1}^R \;z_{R_1}^I\;z_{R_2}^R \;z_{R_2}^I]^T.$ The vector $\mathbf{\tilde{y}_R}$ can be written as $\mathbf{\tilde{y}_R}=\mathbf{H_{eq}}\tilde{\mathbf{x}}+\mathbf{\tilde{z}_R},$ where $\mathbf{H_{eq}}$ is a $4 \times 8$ real matrix whose entries are functions of $h_A$ and $h_B,$ determined by the DSTC. Using $\mathbf{QR}$ decomposition, the matrix $\mathbf{H_{eq}}$ can be decomposed as $\mathbf{H_{eq}}=\mathbf{QR},$ where $\mathbf{Q} \in \mathbb{R}^{4 \times 4}$ is a orthogonal matrix and $\mathbf{R} \in \mathbb{R}^{4 \times 8}$ can be written as $[\mathbf{R_1} \; \mathbf{R_2}],$ with $\mathbf{R_1},\mathbf{R_2} \in \mathbb{R}^{4 \times 4},$ $\mathbf{R_1}$ being an upper-triangular matrix. The joint ML decoding metric at R is given by $\parallel \mathbf{\tilde{y}_R}- \mathbf{H_{eq}}\tilde{\mathbf{x}} \parallel=\parallel \mathbf{Q^T}\mathbf{\tilde{y}_R}- \mathbf{R}\tilde{\mathbf{x}} \parallel=\parallel {\mathbf{y'_R}}- \mathbf{R}\tilde{\mathbf{x}} \parallel,$ where $\mathbf{y'_R}=\mathbf{Q^T}\mathbf{\tilde{y}_R}.$
 
 For a singularity minimal DSTC  over $\mathcal{S},$ let the generator matrices be $\mathbf{M_A}=\mathbf{U_A}$ and $\mathbf{M_B}=\mathbf{U_B},$ where $\mathbf{U_A}$ and $\mathbf{U_B}$ are unitary matrices. Let $\mathbf{u_{A_i}}$ and $\mathbf{u_{B_i}}$ denote the $i^{th}$ rows of $\mathbf{U_A}$ and $\mathbf{U_B}$ respectively. Then the weight matrices of the DSTC defined in \eqref{DSTC_weight} are given by, $\mathbf{W_{A_i}^R}=j\mathbf{W_{A_i}^I}={\footnotesize \begin{bmatrix}\mathbf{u_{A_i}} \\ 0_2^T \end{bmatrix}}$ and 
 $\mathbf{W_{B_i}^R}=j\mathbf{W_{B_i}^I}={\footnotesize\begin{bmatrix}0_2^T \\ \mathbf{u_{B_i}} \end{bmatrix}}.$ 
We have, $\mathbf{W_{A_1}^R}\mathbf{W_{A_1}^I}^\mathbf{H}={\footnotesize\begin{bmatrix}\mathbf{u_{A_1}} \\ 0_2^T \end{bmatrix}\begin{bmatrix}j\mathbf{u_{A_1}^H} & 0_2 \end{bmatrix}=\begin{bmatrix}j & 0 \\ 0 & 0 \end{bmatrix}}$ and similarly, $\mathbf{W_{A_1}^I}\mathbf{W_{A_1}^R}^\mathbf{H}={\footnotesize\begin{bmatrix}-j & 0 \\ 0 & 0 \end{bmatrix}}.$ Hence, $\mathbf{W_{A_1}^R}\mathbf{W_{A_1}^I}^\mathbf{H}+\mathbf{W_{A_1}^I}\mathbf{W_{A_1}^R}^\mathbf{H}=\mathbf{O_2,}$ where $\mathbf{O_2}$ denotes the $2 \times 2$ null matrix. Also, $\mathbf{W_{A_1}^R}\mathbf{{W_{A_2}^R}^H}=\mathbf{O_2},$ since $\mathbf{u_{A_1}}$ and $\mathbf{u_{A_2}}$ are orthogonal vectors. Hence, $\mathbf{W_{A_1}^R}\mathbf{W_{A_2}^R}^\mathbf{H}+\mathbf{W_{A_2}^R}\mathbf{W_{A_1}^R}^\mathbf{H}=\mathbf{O_2}$.  
Similarly, using the fact that $\mathbf{U_A}$ and $\mathbf{U_B}$ are unitary matrices, it can be shown that the following pairs of matrices are also Hurwitz-Radon orthogonal\footnote {Two matrices $\mathbf{M_1}$ and $\mathbf{M_2}$ are said to be Hurwitz-Radon orthogonal if $\mathbf{M_1} \mathbf{M_2^H} + \mathbf{M_2} \mathbf{M_1^H} =0.$}: 
$\lbrace\mathbf{W_{A_1}^R},\mathbf{W_{A_2}^I} \rbrace,$ $\lbrace \mathbf{W_{A_1}^I},$ $ \mathbf{W_{A_2}^R} \rbrace,$ $\lbrace \mathbf{W_{A_1}^I}, \mathbf{W_{A_2}^I} \rbrace,$ $\lbrace \mathbf{W_{A_2}^R}, \mathbf{W_{A_2}^I} \rbrace,$ $\lbrace \mathbf{W_{B_1}^R},$ $\mathbf{W_{B_1}^I} \rbrace,$ $\lbrace \mathbf{W_{B_1}^R},$ $\mathbf{W_{B_2}^R} \rbrace,$ $\lbrace \mathbf{W_{B_1}^R},$ $\mathbf{W_{B_2}^I} \rbrace,$ $\lbrace \mathbf{W_{B_1}^I},$ $ \mathbf{W_{B_2}^R} \rbrace,$ $\lbrace \mathbf{W_{B_1}^I}, \mathbf{W_{B_2}^I} \rbrace,$ $\lbrace\mathbf{W_{B_2}^R}, \mathbf{W_{B_2}^I} \rbrace.$ 
 
Let $\tilde{x}_i$ denote the $i^{th}$ component of the vector $\mathbf{\tilde{x}}.$ The $i^{th}$ and $j^{th}$ columns of $\mathbf{H_{eq}}$ are orthogonal and hence the $(i,j)^{th}$ entry of $\mathbf{R}$ ($i \leq j$) is zero for all realizations of $h_A$ and $h_B,$ if and only if the weight matrices of the DSTC corresponding to the symbols $\tilde{x}_i$ and $\tilde{x}_j$ are Hurwitz-Radon orthogonal (follows from Theorem 2, \cite{SrRa})\footnote {Theorem 2 in \cite{SrRa} proves only the `if' part. However, following an approach similar to the proof given in \cite{SrRa}, it is easy to show that the weight matrices of the DSTC corresponding to the symbols $\tilde{x}_i$ and $\tilde{x}_j$ need to be Hurwitz-Radon orthogonal, for the $(i,j)^{th}$ entry of $\mathbf{R}$ ($i \leq j$) to be zero for all realizations of $h_A$ and $h_B,$ and hence the `only if' part also holds.}. Hence the matrix $\mathbf{R}$ is of the form given below. 
 

{\vspace{-.4 cm}
\footnotesize 
 \begin{align}
 \label{R_matrix1}
 \mathbf{R}=\begin{bmatrix} * & 0 & 0 & 0 & * & *& * &*\\
0& * & 0 & 0 & * & * & *& * \\
0& 0 & * & 0 & * & * & *& * \\
0& 0 & 0 & * & * & * & *& * \\
\end{bmatrix}
\end{align}
\vspace{-.4 cm}}

Note that $*$ denotes possible non-zero entries. The claim is that all the entries denoted by $*$ are non-zeros. It is clear that all the diagonal entries are non-zeros. For the $(1,5)^{th}$ entry in \eqref{R_matrix1} to be a zero, $\mathbf{W_{A_1}^R} \mathbf{W_{B_1}^R}^{\mathbf{H}}+ \mathbf{W_{B_1}^R} \mathbf{W_{A_1}^R}^{\mathbf{H}}=\footnotesize{\begin{bmatrix} 0 & \mathbf{u_{A_1} u_{B_1}^H} \\ \mathbf{u_{B_1} u_{A_1}^H} & 0 \end{bmatrix}}=0,$ which implies that $\mathbf{u_{A_1}}$ and $\mathbf{u_{B_1}}$ are orthogonal vectors. Then the vector $\mathbf{u_{B_1}}$ should belong to the one-dimensional subspace which is orthogonal to $\mathbf{u_{A_1}}.$ Since $\mathbf{u_{A_2}}$ also belongs to this one-dimensional subspace and both $\mathbf{u_{B_1}}$ as well as $\mathbf{u_{A_2}}$ are of unit norm, $\mathbf{u_{B_1}}=e^{j \theta} \mathbf{u_{A_2}},$ for some angle $\theta.$ In that case, the DSTC codeword difference matrix is of the form ${\footnotesize \begin{bmatrix} \Delta x_{A_1} \mathbf{u_{A_1}}+ \Delta x_{A_2} \mathbf{u_{A_2}}\\\Delta x_{B_1} e^{j \theta}\mathbf{u_{A_2}}+ \Delta x_{B_2} \mathbf{u_{B_2}} \end{bmatrix}},$ which is not full rank when $\Delta x_{A_2}, \Delta x_{B_1} \neq 0,$ $\Delta x_{A_1}=\Delta x_{B_2}=0$ and hence the singularity minimization criterion is violated. Hence, $(1,5)^{th}$ entry shown by  $*$ in \eqref{R_matrix1} is non-zero. By a similar argument, it can be shown that the other non-diagonal entries denoted by $*$ in \eqref{R_matrix1} are non-zeros. 

 From  the matrix $\mathbf{R}$ given in \eqref{R_matrix1}, it can be seen that conditioning on the variables $x_{B_1}$ and $x_{B_2},$ the symbols $x_{A_1}$ and $x_{A_2}$ can be decoded independently \cite{SrRa}. Since the total number of choices for $x_{B_1}$ and $x_{B_2}$ is $M^2$ and independently decoding $x_{A_1}$ and $x_{A_2}$ requires $2M$ computations, the decoding involves $2M^3$ computations and hence the decoding complexity at R is $O(M^3).$ 
  
 For square QAM signal sets, the decoding complexity can be further reduced, since the real and imaginary parts independently take values. From \eqref{R_matrix1}, it can be seen that conditioning on $x_{B_1}$ and $x_{B_2},$ the real and imaginary parts of $x_{A_1}$ as well as $x_{A_2}$ can be decoded independently. Since decoding the real and imaginary points of a signal point in QAM signal set is of constant complexity independent of $M$(decoding can be done by rounding off to the nearest integer \cite{SrRa}), the ML decoding complexity is $O(M^2)$ for square QAM signal sets. This completes the proof.
\end{appendices}
\newpage
\begin{figure}[htbp]
\vspace{-.4 cm}
\centering
\subfigure[ MA Phase]{
\includegraphics[totalheight=1in,width=2.5in]{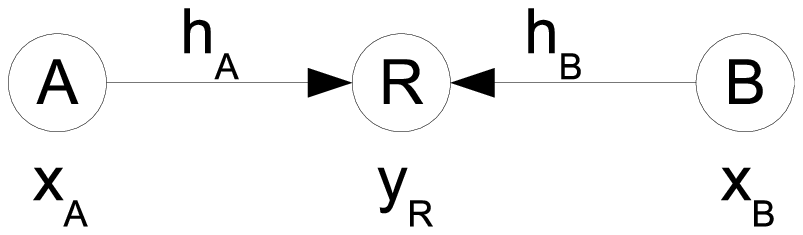}
\label{DNF_MAC}}
\subfigure[ BC Phase]{
\includegraphics[totalheight=1in,width=2.5in]{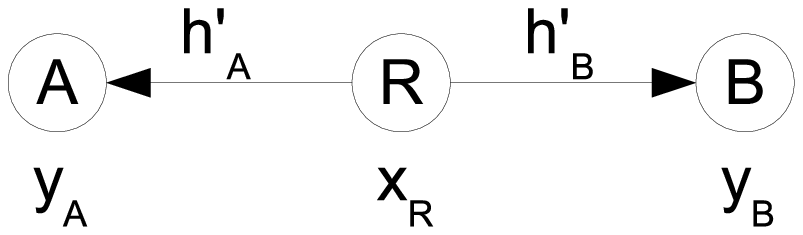}
\label{DNF_BC}}
\caption{Wireless two-way relaying}
\label{DNF_protocol}
\end{figure}
\begin{figure*}[htbp]
\centering
\vspace{-.2 cm}
\includegraphics[totalheight=4in,width=7in]{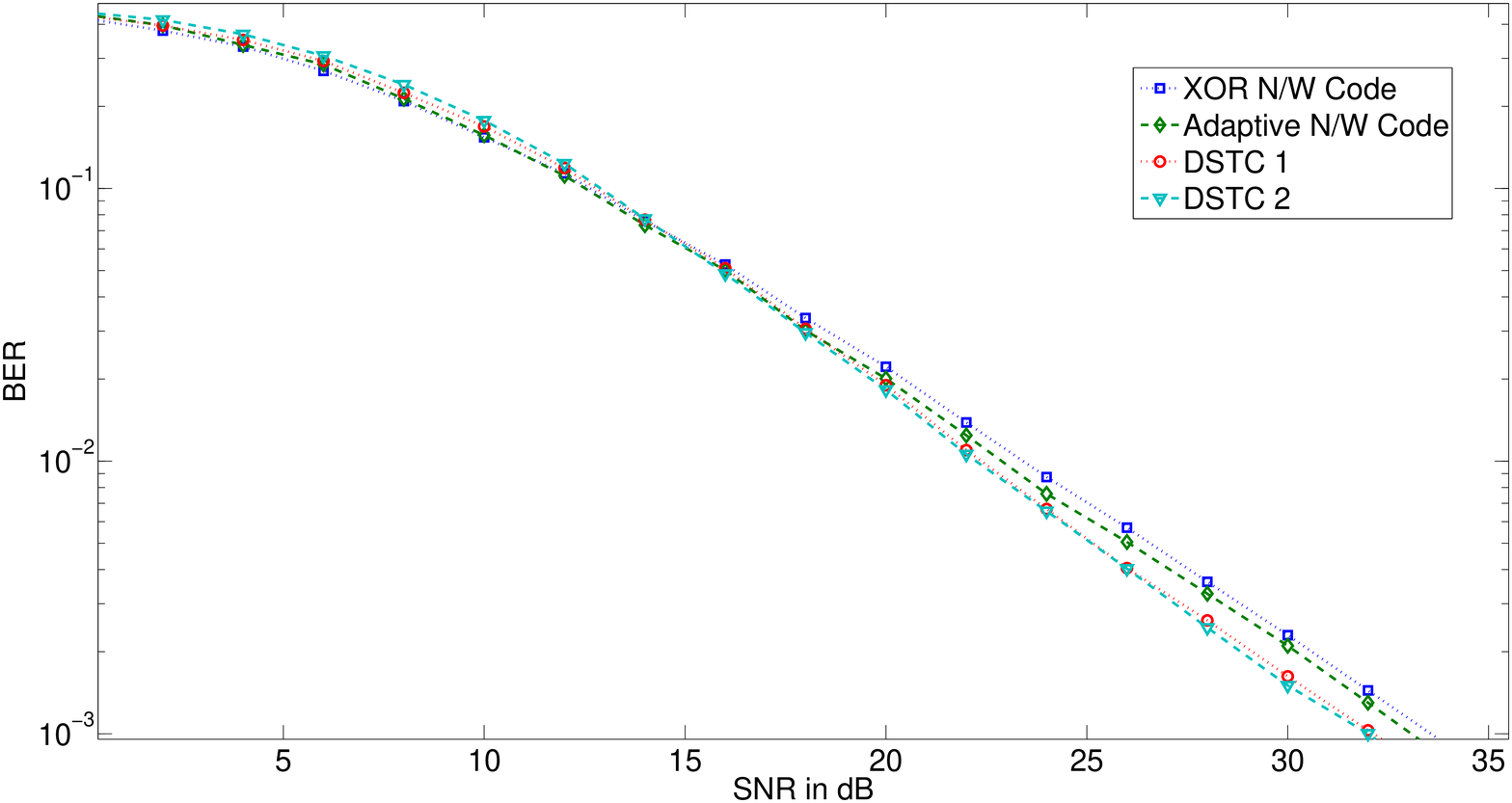}
\vspace{-.75 cm}
\caption{SNR vs BER for different schemes for 4-PSK signal set for a Rayleigh fading scenario.}	
\label{fig:ber_comp1}	
\end{figure*}
\begin{figure*}[htbp]
\centering
\vspace{-.2 cm}
\includegraphics[totalheight=4in,width=7in]{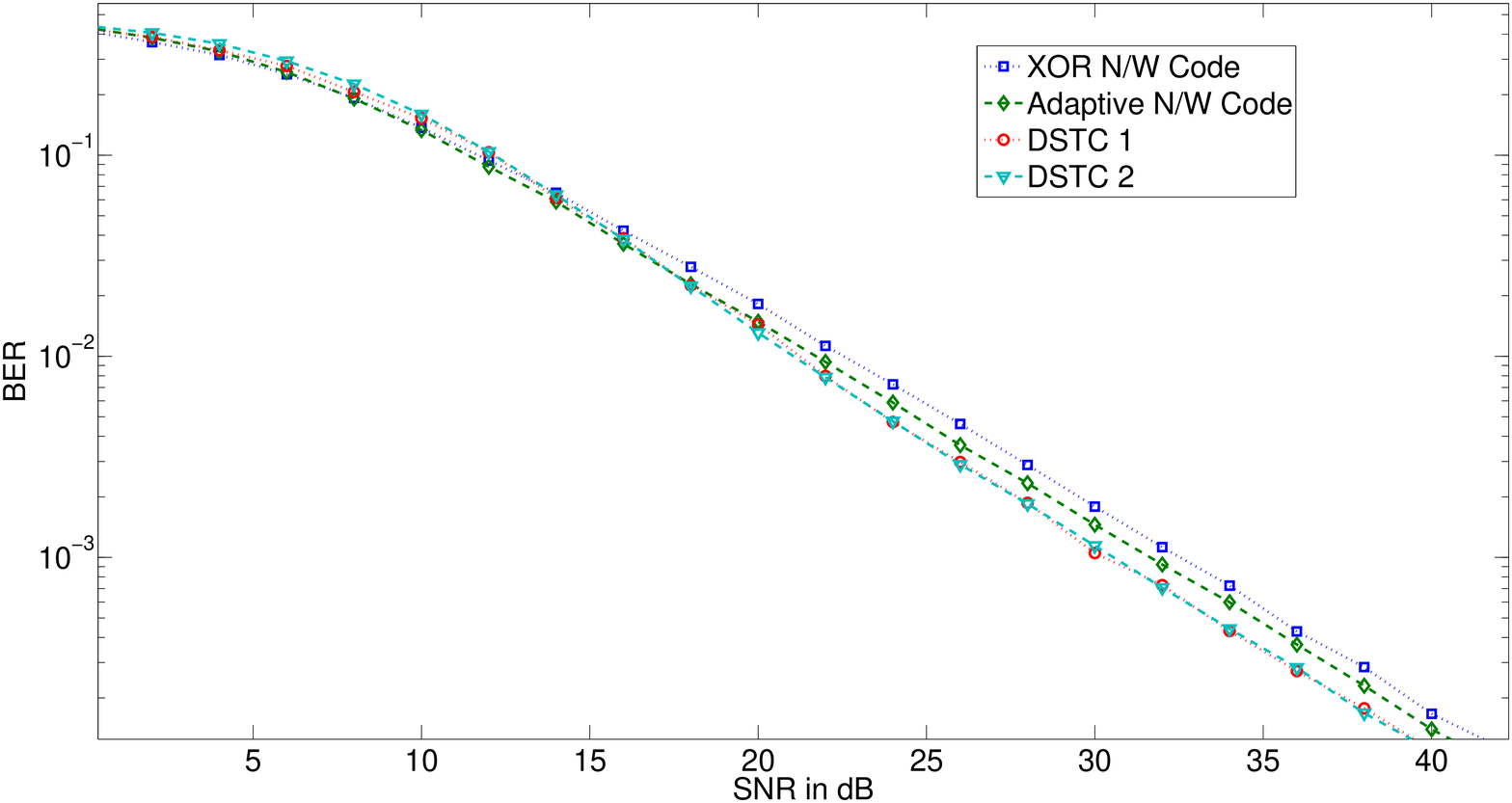}
\vspace{-.75 cm}
\caption{SNR vs BER for different schemes for 4-PSK signal set for a Rician fading scenario with a Rician factor 0 dB.}	
\label{fig:ber_comp2}	
\end{figure*}
\begin{figure*}[htbp]
\centering
\vspace{-.2 cm}
\includegraphics[totalheight=4in,width=7in]{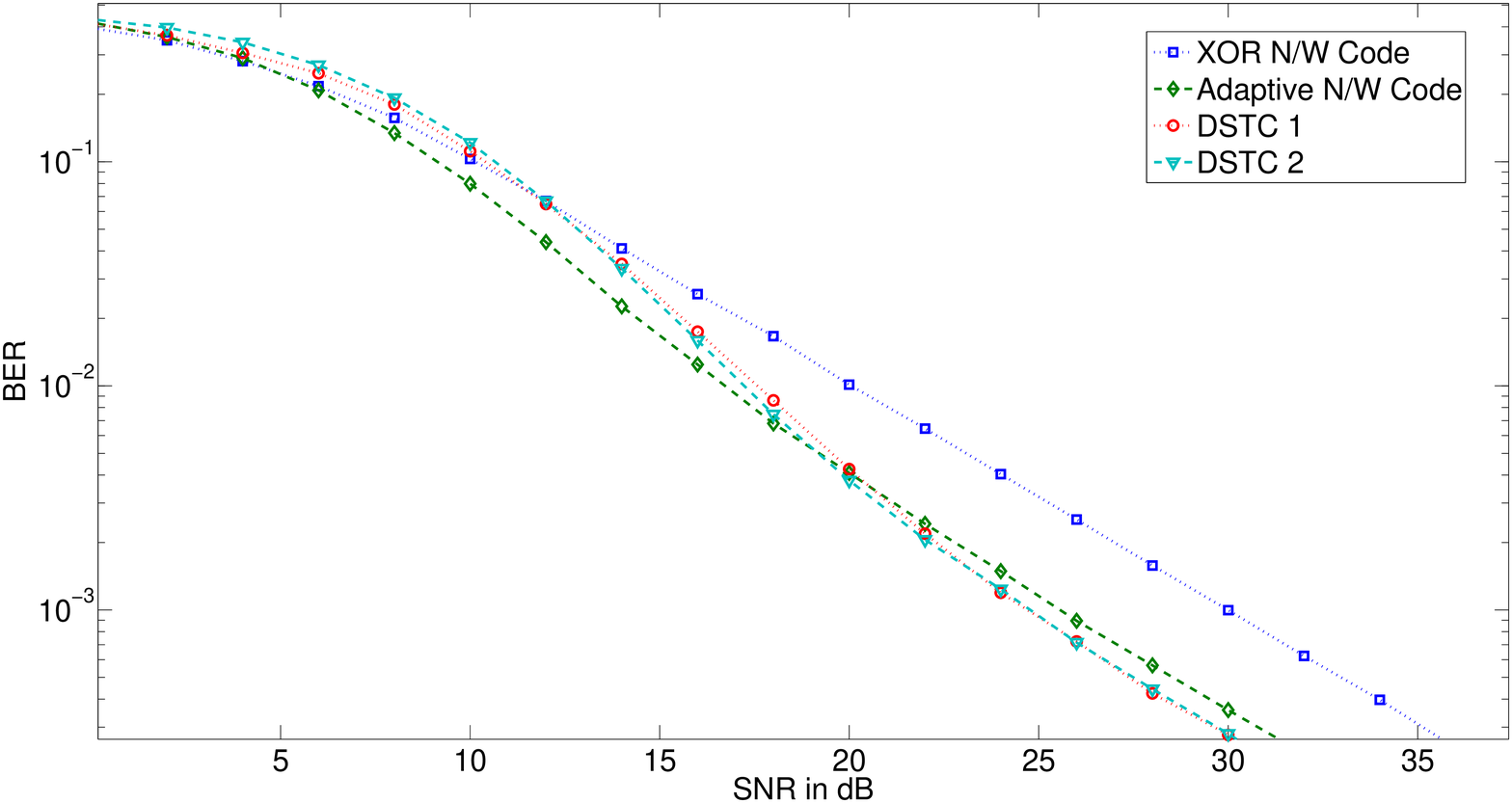}
\vspace{-.75 cm}
\caption{SNR vs BER for different schemes for 4-PSK signal set for a Rician fading scenario with a Rician factor 5 dB.}	
\label{fig:ber_comp3}	
\end{figure*}
\end{document}